\newenvironment{packeditemize}{
\begin{list}{$\bullet$}{
\setlength{\labelwidth}{8pt}
\setlength{\itemsep}{0pt}
\setlength{\leftmargin}{\labelwidth}
\addtolength{\leftmargin}{\labelsep}
\setlength{\parindent}{0pt}
\setlength{\listparindent}{\parindent}
\setlength{\parsep}{0pt}
\setlength{\topsep}{3pt}}}{\end{list}}
\newtheorem{theorem}{Theorem}[section]
\newcommand*{\circled}[1]{\lower.7ex\hbox{\tikz\draw (0pt, 0pt)
    circle (.4em) node {\makebox[0.9em][c]{\small #1}};}}
\renewcommand{\raggedright}{\leftskip=0pt \rightskip=0pt plus 0cm}
\begin{document}
\title{New Secure Sparse Inner Product with Applications to Machine Learning}

\author{{Guowen~Xu, Shengmin~Xu, Jianting~Ning, Tianwei~Zhang, Xinyi~Huang, Hongwei~Li, Rongxing~Lu~\IEEEmembership{Fellow,~IEEE}}

\IEEEcompsocitemizethanks{\IEEEcompsocthanksitem Guowen~Xu and Tianwei~Zhang are with the School of Computer Science and Engineering, Nanyang Technological University. (e-mail: guowen.xu@ntu.edu.sg;  tianwei.zhang@ntu.edu.sg)

\IEEEcompsocthanksitem Xinyi~Huang is with the Artificial Intelligence Thrust, Information Hub, Hong Kong University of Science and Technology (Guangzhou), Guangzhou, China, 511458 (e-mail: xinyi@ust.hk)

\IEEEcompsocthanksitem Shengmin~Xu and Jianting~Ning are with the College of Computer and Cyber Security, Fujian Normal University, Fuzhou, China (e-mail: smxu1989@gmail.com; jtning88@gmail.com)

\IEEEcompsocthanksitem Hongwei~Li  is with the school of Computer Science and Engineering,  University of Electronic Science and Technology of China, Chengdu 611731, China.(e-mail: hongweili@uestc.edu.cn)

\IEEEcompsocthanksitem Rongxing~Lu  is with the school of  Computer Science, University of New Brunswick, Canada.(e-mail: rlu1@unb.ca)
}}

 \IEEEcompsoctitleabstractindextext{
\begin{abstract}
 \raggedright
 Sparse inner product (SIP)  has the attractive property of overhead being dominated by the intersection of inputs between parties, independent of the actual input size. It has intriguing prospects, especially for boosting machine learning on large-scale data, which is tangled with sparse data. In this paper, we investigate privacy-preserving SIP problems that have rarely been explored before. Specifically, we propose two concrete constructs, one requiring offline linear communication which can be amortized  across queries, while the other has sublinear overhead but relies on the more computationally expensive tool. Our approach exploits state-of-the-art cryptography tools including garbled Bloom filters (GBF) and Private Information Retrieval (PIR) as the cornerstone, but carefully fuses them to obtain non-trivial overhead reductions. We provide formal security analysis of the proposed constructs and implement them into representative machine learning algorithms including k-nearest neighbors, naive Bayes classification and logistic regression. Compared to the  existing efforts, our method achieves $2$-$50\times$ speedup in runtime and up to $10\times$ reduction in communication.
\end{abstract}
\begin{IEEEkeywords}
 Secure computation, Machine learning, Sparsity.
\end{IEEEkeywords}}
\maketitle

\IEEEdisplaynotcompsoctitleabstractindextext

\IEEEpeerreviewmaketitle

\section{Introduction}
The sparse inner product (SIP)\cite{ma2021learning,srivastava2020matraptor}, as the basis for sparse linear algebra including matrix multiplication, matrix-vector inner product, and matrix inversion, has shown an irreplaceable role  in various applications, especially in accelerating large-scale machine learning (ML) where sparsity is intertwined. Take the classification task with 20Newsgroups dataset \cite{ruff2019self}  as an example. It consists of over 9000 vectors each of which includes approximately $10^5$ dimensions; however, each vector on average contains less than 100 non-zero values (approximately $0.1\%$). Performing SIP on such a sparse dataset boosts performance by at least an order of magnitude compared to traditional dense multiplication. It stems from the fact that the complexity of the sparse operation only depends on the intersection of the number of non-zero data between the datasets, and is independent of the original data dimension. Beyond classification tasks, SIP has been widely used in various fields of machine learning such as k-nearest neighbors \cite{liao2021deep}, cluster analysis \cite{Tsiokanos22}, naive Bayes, and logistic regression \cite{Yansong21}.

 While SIP is appreciated for improving performance on sparse data, it inherits all the privacy issues that arise from plaintext computations \cite{cui2021exploiting,schoppmann2019make}. Consider a ML inference platform consisting of a client and a server. The client feeds sparse query vectors to the server (which  holds a sparse model), and then the server provides inference results about the query to the client. To facilitate SIP, the client is required to provide plaintext queries while the server needs to expose sparsity details of model parameters. It is clearly  a breach of privacy \cite{zheng2019helen,mishra2020delphi}. Concretely, client queries are often naturally sensitive and may contain personal physiological information, financial information, and disease history, depending on the application. Outsourcing these private data to untrusted third parties inevitably raises privacy concerns \cite{sav2020poseidon,juvekar2018gazelle}. Model parameters, as precious intellectual property rights, should also be reasonably protected to ensure the market competitiveness of service providers.

\subsection{Related Works}
\label{Related Works and  Challenges}
While privacy-preserving machine learning \cite{lehmkuhl2021muse,chandran2021simc,jiang2018secure} has been extensively investigated, sparse linear algebra, especially SIP and its applications in ML are rarely explored. Existing efforts suffer from either scalability (i.e. customization to specific scenarios) or inefficiency (requiring generic secure multi-party computation (MPC) protocols). Below we briefly review these works and provide a discussion of their limitations.

\textit{Chen et al}.\cite{chen2021homomorphic} design a sparse matrix multiplication by carefully combining two primitives, homomorphic encryption and secret sharing. Its core idea is to use fully homomorphic encryption (FHE) to realize the multiplication of a sparse plaintext matrix and any encrypted matrix (ignoring its sparseness) from the client. The ciphertext result is then secretly shared with parties for further computation. However, \cite{chen2021homomorphic} does not consider the sparsity of the client's input, it may not perform well compared to dense multiplication in some scenarios, especially when the data held by the server are dense while the client input is highly sparse. \textit{Cui et al}. proposed S$^3$Rec \cite{cui2021exploiting}, which improves the efficiency of \cite{chen2021homomorphic} by combining homomorphic encryption with existing Private Information Retrieval (PIR)\cite{angel2018pir} techniques. S$^3$Rec proposes two different sparse operations depending on whether sparsity is considered sensitive. When the  dataset sparsity  is accessible, S$^3$Rec relies heavily on Beaver's triples to achieve fast matrix multiplication. In contrast, S$^3$Rec uses PIR as the underlying technique to obtain the intersection between two matrices when sparsity is agnostic. Furthermore, homomorphic encryption is used to realize the multiplication between two sparse matrices. Since S$^3$Rec is dedicated to exploring data sparsity in the secure cross-platform social recommendation, it is highly scenario-specific and requires non-trivial evolution for general ML applications.

The closest work to this paper is ROOM \cite{schoppmann2019make}, which focuses on designing low-level secure linear algebra that can be applied to all ML scenarios that require linear operations.  ROOM defines a new cryptographic primitive, Read-Only-Oblivious Map, and uses it as a building block to implement other linear operations including Gather, Scatter, and sparse matrix multiplication. The fly in the ointment is that ROOM relies heavily on the general MPC protocol \cite{bellare2012foundations, huang2014amortizing} to achieve the desired secure computation. This is usually computationally time-consuming and inevitably incurs non-negligible communication rounds. Therefore, from the status quo, designing a generic sparse linear algebra for practical applications still leaves too much to be desired.
\subsection{Technical Challenges}
This paper aims to break the dilemma of previous work and provide a secure SIP approach towards practicality. In briefly, we design a highly optimized secure SIP (called S-SIP), and then we extend S-SIP to general ML scenarios to demonstrate its superiority for accelerating computation. Note that fulfilling the above aspirations is non-trivial and requires careful addressing of the following challenges.

\begin{packeditemize}
\item \textit{How to get out of the cage of inefficiency}? Existing work heavily relies on the generic MPC protocol to provide secure set intersection followed by inner product. This is clearly doable  but at the cost of incurring potentially unnecessary overhead. However, bypassing the generic MPC to design a customized secure SIP requires very careful design. A potential challenge is how to simultaneously infer the intersection of two inputs and complete the inner product. Intuitively, we can use existing techniques such as the Private Set Intersection (PSI) \cite{lepoint2021private} or PIR \cite{mahdavi2022constant} to first obtain the intersection and then use Beaver's triples \cite{keller2018overdrive} or homomorphic encryption to perform linear operations. However, PSI inherently leaks the intersection itself, which is not allowed in S-SIP and requires careful modification to accommodate higher security requirements. PIR is a promising method, but it still needs to be highly optimized such as batch query, recursion and oblivious expansion to speed up retrieval performance.

\item \textit{How to design S-SIP with satisfactory scalability}? This means that the constructed S-SIP should exhibit adaptable performance for datasets of different scales.  As in machine learning scenarios, the data held by the server may be static and small-scale, or include a large number of entries. For the former, it is desirable if there exists a way that the majority of computations are performed offline, i.e., independently of client input. It is bound to significantly speed up the computation in the online phase. For the latter, we also expect the overhead to be only linear with the size of the smaller dataset (usually the input of clients) and logarithmic with the larger dataset. However, there is no previous work to achieve the above requirements.

\item \textit{How to enable fast S-SIP without privacy trade-offs}? Existing general S-SIP methods need to expose some sparsity information to reduce computation time. For example, when performing the matrix-vector inner product, ROOM \cite{schoppmann2019make} is forced to reveal sparsity information including the number of non-zero rows or columns in the matrix and the number of non-zero entries in the vector. These messages are sometimes privacy-critical. Especially in the scenario of medical data analysis, the leakage of non-zero entries can often be used by the adversary as side information to infer whether the target user has infected certain diseases. Therefore, it is necessary to design S-SIP without any privacy trade-offs, which requires to design a new inner product operation that is fundamentally different from previous work.

\end{packeditemize}

\subsection{Our Contributions}
\label{Our Contributions}
In this paper, we propose two efficient S-SIP constructs (called S-SIP$_1$ and S-SIP$_2$) to address the above challenges. S-SIP$_1$ uses the Bloom filter (BF) and its variant, the garbled Bloom filter (GBF) \cite{dong2013private}, as the underlying building blocks. It incurs an overhead linearly proportional to the size of the server's dataset (larger) in the offline phase, but the overhead in the online phase is completely independent of the size of the server's data. This makes S-SIP$_1$ ideal for ML scenarios where the server holds small and fixed datasets. S-SIP$_2$ is fully online without precomputing. We use the state-of-the-art PIR technology \cite{mughees2021onionpir} as the underlying technology and extend it to the batch query mode to reduce overhead through amortization. S-SIP$_2$ enables the overhead of S-SIP to be linear to the client's input (smaller) and logarithmic to the size of the server's dataset. Our constructions does not require any privacy tradeoffs to gain performance benefits. We provide a formal security analysis as well as extensive experiments to demonstrate the semantic security and superiority of the proposed schemes. In summary, our contributions are as follows:
\begin{packeditemize}
\item We present a new S-SIP primitive that is general, efficient, scalable, and can be used in any linear operation where data sparsity exists.

\item We design two concrete structures, S-SIP$_1$ and S-SIP$_2$, where the former requires computationally intensive offline operations but exhibits superior online performance. The latter relies on an optimized state-of-the-art PIR technique whose overhead grows only logarithmically with the size of the dataset held by the server.

\item  We provide a formal security analysis of the proposed constructs and  implement them  into representative machine learning algorithms including k-nearest neighbors, naive Bayes and logistic regression. Compared to the  existing efforts, our method achieves $2$-$50\times$ speedup in runtime and up to $10\times$ reduction in communication.
\end{packeditemize}
\textbf{Roadmap}: The remainder of this paper is organized as follows. In  Section \ref{sec:PROBLEM STATEMENT}, we review some basic concepts and introduce the scenarios and threat models involved in this article. In  Section \ref{The S-SSI$_1$ Construction} and Section~\ref{The S-SSI$_2$ Construction}, we give the details of our  proposed constructions.  Next, performance evaluation  is presented in Section \ref{Performance Evaluation}. Finally, Section \ref{sec:conclusion} concludes the paper.

\section{BACKGROUND AND PRELIMINARY}
\label{sec:PROBLEM STATEMENT}
We first define the threat model considered in this paper, and then review some  tools and cryptographic primitives used in the proposed construction.
\subsection{Threat Model}
\label{secsec:Threat Model}
We consider a secure two-party computation model consisting of a client and a server. On the S-SIP computing platform, the client holds the dataset $(X, S)=\{(x_1, s_1), \cdots, (x_t, s_t)\}$, and the server holds the dataset $(Y, G)=\{(y_1, g_1), \cdots, (y_n, g_n)\}$. At the end of the calculation, the client and the server obtain the secret-sharing of  the inner product of the intersection of the two datasets, i.e., $f((X, S), (Y, G))=\sum_{i\in[t], j\in[n], x_i=y_j}s_ig_j$, where $[t]$ denotes the set $\{1, \cdots, t\}$. On the ML computing platform, we extend our  S-SIP to general ML scenarios to demonstrate its efficiency. It will be used as a basic component of linear algebra to execute linear operations including matrix-vector inner product and matrix multiplication. At the end of the computation, the client  gets the inference results and the server gets nothing. In the above scenarios both the client and the server are considered honest but curious, which is consistent with all previous work \cite{chen2021homomorphic,cui2021exploiting,schoppmann2019make,Xu2019TIFS}. Specifically, both parties follow the protocol's specifications but may infer the other's data privacy through passive acquisition of data flows during protocol execution. The security requirement of our S-SIP is to ensure that at the end of the protocol, the two parties only get the share of the inner product, and know nothing about their respective secret inputs.

\subsection{Secret Sharing and Oblivious Transfer}
\label{secsec:Secret Sharing and Oblivious Transfer}
\begin{packeditemize}
\item \textbf{Additive Secret Sharing}\cite{mishra2020delphi}. Without loss of generality, we assume that all variables involved in the paper lie in a prime field $\in \mathbb{F}_p$. Hence, given an arbitrary $x \in \mathbb{F}_p$, the  additive secret sharing of $x$ is denoted as a pair $(\left \langle x \right \rangle_0, \left \langle x \right \rangle_1)=(x-r, r)\in \mathbb{F}_{p}^{2}$, where $r$ is a random value uniformly selected from $\mathbb{F}_p$, and $x=\left \langle x \right \rangle_0+\left \langle x \right \rangle_1$.  Additive secret sharing is perfectly hiding, that is, given a share $\left \langle x \right \rangle_0$ or $\left \langle x \right \rangle_1$, $x$ is perfectly hidden.

\item \textbf{Oblivious Transfer (OT)} \cite{keller2018overdrive}. The 1-out-of-$n$ Oblivious Transfer (OT) is a two-party secure protocol, where the sender (defined as $P_0$) has $n$ inputs $(a_0, \cdots, a_n)$, while the receiver (defined as $P_1$) input a choice $b\in[n]$. At the end of the OT-execution, $P_1$ learns  $a_b$ while $P_0$ learns  nothing.
\end{packeditemize}

\subsection{Bloom Filter and Garbled Bloom Filter}
\label{secsec:Bloom Filter (BF) and Garbled Bloom Filter (GBF)}
\begin{packeditemize}
\item \textbf{Bloom Filter (BF)} \cite{dong2013private}. A Bloom filter is a compact data structure for probabilistic set membership testing. A BF is essentially a binary array of  $m$ bits that can be used to represent a set $Q$ of at most $n$ elements. Specifically, given a collection of $k$ hash functions $H=\{h_1, \cdots, h_k\}$, each of which maps an arbitrary element to the range $[1, m]$, i.e., $h_i: \{0, 1\}^{\ast}\rightarrow [m]$.   We denote the bit of BF at index $i$ by $BF[i]$. BF is first initialized with all bits in the array to 0. Then, to insert an element $x\in Q$ into BF, the element is first hashed through $k$ hash functions to obtain $k$ indices. All these indices corresponding to  BF will be assigned to  1, i.e.,  $BF[h_i(x)]=1$ for $1\leq i\leq k$.  Similarly, to verify whether an element $y\in Q$ is in the BF, $y$ is also hashed through $k$ hash functions, and then all locations $y$ hashes to are checked. If any of the bits at the locations is not 1, $y$ is not in set $Q$, otherwise $y$ probably  in $Q$. In this paper, we choose the optimal $k$ and $m$ so that once the above verification passes, then $y\in Q$ with overwhelming probability.

\item \textbf{Garbled Bloom Filter (GBF)} \cite{dong2013private}. GBF is similar in function to BF but it is an array of integers. It is used to store key-value pairs $(x, y)$, where $y$ is associated with key $x$ via $y=\sum_{i=1}^{k}GBF[h_i(x)]$. To achieve this, GBF and BF are first initialized with all entries as $\bot$ and $0$, respectively.  Then, for each key-values pair $(x, y)$, we  set $BF[h_j(x)]=1$ for all $j\in [k]$. Then, let $$B=\{h_j(x)|j\in[k], GBF[h_j(x)]=\bot\}$$ be the relevant positions of GBF that have not yet been set.  For $j\in B$,  we choose random values for $GBF[i]$, such that $\sum_{j=1}^{k}GBF[h_j(x)]=y$. For any remaining $GBF[j]=\bot$, $GBF[j]$ is set with random value uniformly chosen fron $\mathbb{F}_p$.
\end{packeditemize}

\subsection{Hashing Scheme}
\label{secsec:Hashing Scheme}
\begin{packeditemize}
\item \textbf{Cuckoo Hashing}\cite{menon2022spiral}. The basic Cuckoo hash contains $m$ bins, denoted as $C[1], \cdots, C[m]$. Given a stash, $k$ hash functions $h_i$, $i\in [k]$ which map any input to the range $[m]$, the workflow of inserting any element $x$ to the Cuckoo hash table is as follows: Calculate the $k$ candidate bins of $x$ by performing $k$ independent hashes for $x$. Place $x$ into an arbitrary empty candidate bin. If none of the $k$ candidate bins is empty, select one at random, remove the element  currently in that bin ($x_{old}$), place $x$ in the bin, and then reinsert the previously removed $x_{old}$. If re-inserting $x_{old}$ causes another element to be removed, this process continues recursively for a maximum number of iterations. 

\item \textbf{2-choice hashing} \cite{ali2021communication}. 2-choice hashing is similar to Cuckoo hashing function except that instead of $k$ hash functions we only choose two hash functions $h_1$ and $h_2$. A 2-choice hashing algorithm assigns $x$ to whichever of $h_1(x)$, $h_2(x)$ has fewest elements.
\end{packeditemize}

\subsection{Fully Homomorphic Encryption}
\label{secsec:Fully Homomorphic Encryptioe}
Fully homomorphic encryption (FHE) \cite{choi2021compressed} enables the evaluation of arbitrary functions (parsed as polynomials) under ciphertext without decryption.
Let the plaintext space be $\mathbb{F}_p$,  a FHE under the public key encryption system usually contains the following algorithms:
\begin{packeditemize}
\item $\mathtt{KeyGen}(1^\lambda)\rightarrow (pk, sk)$.  Taking the security parameter $\lambda$ as input,  the algorithm $\mathtt{KeyGen}$ outputs the public-secret key pair ( $pk$, $sk$) required for fully homomorphic encryption.
\item $\mathtt{Enc}(pk, x)\rightarrow ct$.  Taking $pk$ and any  plaintext $x\in\mathbb{F}_p$, the algorithm $\mathtt{Enc}$ outputs  the ciphertext $ct$ of $x$.
\item $\mathtt{Dec}(sk, ct)\rightarrow x$.  Taking $sk$ and a ciphertext $c$ as input, the algorithm $\mathtt{Dec}$ outputs the decryption   corresponding plaintext $x$.
\item $\mathtt{Eval_{sum}}(pk, \{ct_i\})\rightarrow c'$. Taking $pk$ and a set of ciphertexts $\{ct_i=\mathtt{Enc}(pk, x_i)\}$ as input,  the algorithm $\mathtt{Eval_{sum}}$ outputs  a ciphertext $c'$  encrypting  $\sum_{i}x_i$.
\item $\mathtt{Eval_{mul}}(pk, \{ct_i\})\rightarrow c'$. Taking $pk$ and a set of ciphertexts $\{ct_i=\mathtt{Enc}(pk, x_i)\}$ as input,  the algorithm $\mathtt{Eval_{mul}}$ outputs  a ciphertext $c'$  encrypting  $\prod_{i}x_i$.
\end{packeditemize}
In this paper, we utilize homomorphic encryption methods BFV \cite{halevi2019improved} and BGV \cite{gentry2012ring} to implement the above homomorphic operations, which are constructed on the Ring Learning with Errors (RLWE) problem and have been well implemented by the mainstream libraries \cite{chen2017simple}.
\subsection{Private Information Retrieval}
\label{secsec:Private Information Retrieval}
In this paper, we leverage PIR technology on the single server as an underlying building block \cite{mughees2021onionpir}. Briefly, given a server holding a database $DB$ of $N$ strings, PIR enables the client to read an arbitrary entry $DB[i]$ without revealing $i$. Informally, PIR on a single server consists of the following algorithms:
\begin{packeditemize}
\item $\mathtt{PIR_{KeyGen}}(1^\lambda)\rightarrow (pk, sk)$.  Taking the security parameter $\lambda$ as input, the algorithm $\mathtt{KeyGen}$ outputs the public-secret key pair ( $pk$, $sk$) for the fully homomorphic encryption.
\item $\mathtt{PIR_{Query}}(pk, i)\rightarrow q$.  Taking $pk$ and a  plaintext index $i\in {N}$, the algorithm $\mathtt{PIR_{Query}}$ outputs  the ciphertext query $q$.
\item $\mathtt{PIR_{Answer}}(pk, q, DB)\rightarrow d$.  Taking $pk$, the  ciphertext query $q$ and $DB$ as input, the algorithm $\mathtt{PIR_{Answer}}$ outputs an answer $d$ encrypting the content of $DB[i]$.
\item $\mathtt{PIR_{Extract}}(sk, d)\rightarrow DB[i]$. Taking $sk$ and the answer $d$  as input,  the algorithm $\mathtt{PIR_{Extract}}$ outputs $DB[i]$.
\end{packeditemize}
\section{The S-SIP$_1$ Construction}
\label{The S-SSI$_1$ Construction}
In this section, we describe the first construct, S-SIP$_1$, which requires offline overhead linear to larger dataset sizes, while concomitant with superior online performance. The functionality of S-SIP$_1$ is depicted in Fig~\ref{Functionality S-SSI$_1$}, where the client holds the dataset $(X, S)=\{(x_1, s_1), \cdots, (x_t, s_t)\}$, and the server holds the dataset $(Y, G)=\{(y_1, g_1), \cdots, (y_n, g_n)\}$. At the end of the calculation, the client and the server obtains the shares of  the inner product of the intersection. We first give a high-level overview of our S-SIP$_1$, then we describe the technical details of S-SIP$_1$ and analyze its security.
\renewcommand\tablename{Fig}
\renewcommand \thetable{\arabic{table}}
\setcounter{table}{0}
\begin{table}
\begin{tabular}{|p{8cm}|}
\Xhline{1pt}
\begin{center}
$\mathtt{\pi}_{\rm S-SIP}$: Functionality of S-SIP
\end{center}
 \textbf{Input:}  The client (named $P_0$)  holds a set of $t$ pairs $(X, S)=\{(x_1, s_1), \cdots, (x_t, s_t)\}$, while the server (named $P_1$) holds dataset of key-values pairs $(Y, G)=\{(y_1, g_1), \cdots, (y_n, g_n)\}$\\
 \textbf{Output:} $P_b$ learns a set $\mathbf{U}_b=\{\left \langle \mathtt{u}_i\right \rangle_b\}_{i\in t}$, where $\left \langle \mathtt{u}_i\right \rangle_b=\left \langle s_ig_j\right \rangle_b$ if $x_i=y_j$ for some $j\in [n]$. otherwise $\left \langle \mathtt{u}_i\right \rangle_b=\left \langle 0\right \rangle_b$. \\
\\\Xhline{1pt}
\end{tabular}
\caption{Functionality of S-SIP}
\label{Functionality S-SSI$_1$}
\vspace{-20pt}
\end{table}

\subsection{Overview}
\label{sec:Overview}
As shown in Fig~\ref{Functionality S-SSI$_1$}, we assume that the server holds the dataset $(Y, G)$, and the client holds the dataset $(X, S)$. For each component $x_i\in X$, we aim at S-SIP$_1$ to compute  the secret share of $s_ig_j$ to both parties if $x_i = y_j$ for some $j$. Otherwise both hold  shares of $0$. The security requirements of S-SIP$_1$ require that at the end of protocol, the server has no knowledge of the client's inputs, and the client also knows nothing about  the dataset held by the server. The core  insight of S-SIP$_1$ lies in the fusion of BF and its variant GBF. Specifically, BF can be used to check the membership of  $x_i$ in the set represented by BF. It is implemented by accessing $k$ locations  $h_1(x_i), \cdots, h_k(x_i)$  in the BF  and checking that they are all 1 (or alternatively, checking $k=\sum_{j\in[k]}BF[h_j[x_i]]$).  GBF as a data structure similar to BF, it allows to store not only a set but also a set of associated values. Concretely, If $x_i$ is in the dataset held by the server, computing $\sum_{j\in[k]}GBF[h_j[x_i]]$ will result in the associated value (i.e., $g_i$). However, since $x_i$ is not in the dataset, $\sum_{j\in[k]}GBF[h_j[x_i]]$ implies a garbage value which will be converted to shares of 0.

Based on the properties of BF and GBF, S-SIP$_1$ can be divided into the following parts: in the \textbf{Offline Phase}, the server generates BF which is inserted all of  indexes of $Y$,      and a GBF which contains the database $(Y, G)$. In the online phase, the client interacts with the server with inputs  $\{(x_1, s_1), \cdots, (x_t, s_t)\}$. For each $x_i$, the two parties first run a \textbf{Secure Membership Check Protocol} to check whether $x_i$ is in the BF. The resulting membership bits will be held by both parties in the secret sharing way. Then, based on the previous results, the two parties perform the \textbf{Secure Associated Value Extraction Protocol}, which either gets secret sharing of the value $g_i$ associated with $x_i$ or  shares about 0. The client and the server further execute the \textbf{Secure Component Product Protocol}, where the outputs of the previous two protocols is used as input, and outputs the secret sharing of $s_ig_j$, if $x_i = y_j$ for some $j$, otherwise, outputs shares of 0.  Finally, the client and the server locally sum up all the obtained shares, and eventually obtains a secret-share of the inner product of two datasets' intersections, respectively.

\subsection{Technical Details of S-SIP$_1$}
\label{Technical Details of S-SSI$_1$}
\begin{table*}
\begin{tabular}{|p{17.5cm}|}
\Xhline{1pt}
\begin{center}
Implementation of S-SIP$_1$
\end{center}
 \textbf{Input:}  The client (named $P_0$)  holds a set of $t$  queries $X=\{ x_1, \cdots x_t\}$ associated with $t$ values $S=\{s_1, \cdots, s_t\}$.     The server (named $P_1$) holds dataset of key-values pairs $(Y, G)=\{(y_1, g_1), \cdots, (y_n, g_n)\}$\\
 \textbf{Implementation:}
 \begin{packeditemize}
 \item [1.]\textbf{Offline Phase:}
 \begin{packeditemize}
\item $P_0$ and $P_1$ negotiate $k$ hash functions $\{h_1, \cdots, h_k\}$ where $h_i: \{0, 1\}^{\ast} \rightarrow [m]$ and $m$ represents the size of the BF that is enough to insert $n$ entries.
\item Given the security parameter $\lambda$, $P_1$ invokes the algorithm $\mathtt{KeyGen}(1^\lambda)$  to generate key-pair $(pk, sk)$, and then sends $pk$ to $P_0$.
\item Using $k$ hash functions, $P_1$ inserts the set $Y$ containing keys $\{y_1, \cdots, y_n\}$ into BF, and also inserts set $(Y, G)=\{(y_1, g_1), \cdots, (y_n, g_n)\}$ containing key-value pairs into GBF. $P_1$ aborts if either insert operation fails.
\item Given $pk$, $P_1$ encrypts $BF$ and $GBF$ as $ct.BF[i]=\mathtt{Enc}(pk, BF[i])$ and $ct.GBF[i]=\mathtt{Enc}(pk, GBF[i])$, respectively, for every  $i \in [m]$.
\item $P_1$ sends $ct.BF$ and $ct.EBF$ to $P_0$.
\end{packeditemize}
\item [2.]\textbf{Online Phase:} $P_0$  interacts with $P_1$ to perform the following steps in parallel for every $x_j\in[t]$.
\begin{packeditemize}
\item [(a)] \textbf{Secure Membership Check Protocol}:
\begin{packeditemize}
\item  $P_0$ uniformly selects mask $\mu \in \mathbb{F}_p$ and computes $\nu=\sum_{i=1}^{k}ct.BF[h_i(x_j)]-\mathtt{Enc}(pk, \mu)$. Then, $P_0$ sends the ciphertext  $\nu$ to $P_1$.
\item $P_1$ obtains $\mu'$  by decrypting $\nu$ with the secret key $sk$.
\item $P_0$ and $P_1$ invoke an instance of 1-out-of-($k+1$) OT:
\begin{packeditemize}
\item [-] $P_1$  selects a random bit $b_{P_1}$. Then, $P_1$ as the OT's sender sets its inputs  to $\{b_0, \cdots, b_k\}$, where each $b_i$ is equal to $b_{P_1}$, except that $b_{(k-\mu')\mod (k+1)}$ is set equal to $1\bigoplus b_{P_1}$.
\item [-] $P_0$ as the OT's receiver inputs choice $\mu\mod (k+1)$, and then obtain $b_{P_0}$ from the OT's functionality.
\end{packeditemize}
\end{packeditemize}
\item[(b)]\textbf{Secure Associated Value Extraction Protocol}:
\begin{packeditemize}
\item  $P_0$ uniformly selects mask $\delta \in \mathbb{F}_p$ and computes $\nu'=s_j\cdot \sum_{i=1}^{k}ct.GBF[h_i(x_j)]-\mathtt{Enc}(pk, \delta)$. Then, $P_0$ sends the ciphertext  $\nu'$ to $P_1$.
\item $P_1$ obtains $\rho$  by decrypting $\nu'$ with the secret key $sk$.
\end{packeditemize}
\item [(c)] \textbf{Secure Component Product Protocol}:
\begin{packeditemize}
\item  $P_0$ and $P_1$ invoke an instance of 1-out-of-$2$ OT:
\begin{packeditemize}
\item [-] $P_0$  selects a random value $\Delta \in \mathbb{F}_p$. Then, $P_0$ acts as the OT's sender with two inputs $m_0=\Delta+b_{P_0}\cdot \delta$ and $m_1=\Delta+(1-b_{P_0})\cdot \delta$.
\item [-] $P_1$ as the OT's receiver inputs choice bit $b_{P_1}$, and then obtain $r$ from the OT's functionality. Note that $r=\Delta+b\cdot \delta$ where $b=b_{P_0}\bigoplus b_{P_1}$.
\end{packeditemize}
\item  $P_0$ and $P_1$ invoke another instance of 1-out-of-$2$ OT:
\begin{packeditemize}
\item [-] $P_1$ selects  random value $\alpha_j \in \mathbb{F}_p$, Then, $P_1$ acts as the OT's sender with two inputs $m_0=r+b_{P_1}\cdot (\rho-\alpha_j)-(1-b_{P_1})\alpha_j$ and $m_1=r+(1-b_{P_1})\cdot (\rho-\alpha_j)-(b_{P_1})\alpha_j$.
\item [-] $P_0$ as the OT's receiver inputs choice bit $b_{P_0}$, and then obtain $r'$ from the OT's functionality. Note that $r'=r+b\cdot (\rho-\alpha_j)+(1-b)\cdot\alpha_j$ where $b=b_{P_0}\bigoplus b_{P_1}$.
\end{packeditemize}
\item $P_0$ computes $r'-\Delta$, which implies that the output is exactly $\delta+\rho-\alpha_j$ if $b=1$. Otherwise, the output is $-\alpha_j$.
\item Since $P_1$ holds $\alpha_j$, the two parties holds the secret shares of $\delta+\rho$ if $b=1$, or shares of $0$ otherwise.
\end{packeditemize}
\end{packeditemize}
\end{packeditemize}\\
\Xhline{1pt}
\end{tabular}
\caption{Implementation of S-SIP$_1$}
\label{Implementation of S-SIP$_1$}
\vspace{-20pt}
\end{table*}
As described above, S-SIP$_1$ can be divided into offline phase and online phase, wherein the online phase contains three sub-protocols: \textbf{Secure Membership Check Protocol}, \textbf{Secure Associated Value Extraction Protocol} and \textbf{Secure Component Product Protocol}. Fig~\ref{Implementation of S-SIP$_1$} depicts the detailed technique for implementing S-SIP$_1$, and below we explain each step further.

\subsubsection{\quad\; Offline Phase}
\label{Offline Phase}
This phase requires the server to perform a series of offline operations that are independent of client input. This process is performed only once, and can be reused for multiple protocol executions, even for different clients. Specifically, the server first generates a public-secret key pair for homomorphic encryption, and $k$ hash functions for BF and GBF. The server then maps all entries in its own database into BF and GBF, using $k$ hash functions. At the end, the server performs homomorphic encryption on each entry in the BF and GBF and sends the result to the client (see step 1 in Fig~\ref{Implementation of S-SIP$_1$}).

\subsubsection{\quad\; Online Phase}
\label{Online Phase}
Given  queries $X=\{ x_1, \cdots, x_t\}$ associated with $t$ values $S=\{s_1, \cdots, s_t\}$, the client  interacts with server to perform the following steps in parallel for every $x_j\in[t]$.

(a) \textbf{Secure Membership Check Protocol}: For each $x_j\in[t]$, the client first computes  $\sum_{i=1}^{k}ct.BF[h_i(x_j)]$.
It is easy to observe that $\sum_{i=1}^{k}ct.BF[h_i(x_j)]$ is an encryption with a value less than $k+1$. Further, this value is equal to $k$ if $x_j$ is presented in the server's database $Y$. The purpose of this sub-protocol is to compute the membership bit of $x_j$ and share it secretly between two parties. To achieve this, a scarecrow approach is to use HE to convert membership $b$ into encryption for one bit ($0$ or $1$), which is then shared secretly to all parties. The entire conversion can be done by homomorphically evaluating an equality circuit, which has the multiplicative depth  $\lceil \log(k)\rceil$ resulting computationally expensive overhead.

Instead, we explore a simple approach based on oblivious transfer. Let $\sum_{i=1}^{k}ct.BF[h_i(x_j)]$  be the encryption of some plaintext $\eta$. The client sends $\nu=\sum_{i=1}^{k}ct.BF[h_i(x_j)]-\mathtt{Enc}(pk, \mu)$ to the server, where  the random value $\mu$ is treated as a secret-share of $\eta$ held by the client.  The server decrypts $\nu$ with secret key and obtains its share $\mu'=\eta-\mu$. Based on this, the client and server invoke an instance of 1-out-of-($k+1$) OT as below.

  The server first selects a random bit $b_{P_1}$. Then, it acts as the OT's sender and sets its inputs  to $\{b_0, \cdots, b_k\}$, where each $b_i$ is equal to $b_{P_1}$, except that $b_{(k-\mu')\mod (k+1)}$ is set to $1\bigoplus b_{P_1}$. On the other hand,  the client acts  as the OT's receiver and inputs choice $\mu\mod (k+1)$. At the end  of the OT execution, the functionality of OT ensures that the client gets the $b_{P_0}$, where $b_{P_0}\bigoplus b_{P_1}=1$ if $\mu+\mu'=k$, or $b_{P_0}\bigoplus b_{P_1}=0$, and the server gets nothing.  Therefore, the protocol described above achieve the  functionality  that the two parties obtain XOR shares of 1 or 0 if the client's query is or is not in the database.

(b) \textbf{Secure Associated Value Extraction Protocol}: This sub-protocol is used to compute the secret-shared associated value. It enables the client and server to hold a share of a value on the database, respectively, and this value corresponds to the client's current query. To achieve this, the client first computes  $\phi=s_j\cdot \sum_{i=1}^{k}ct.GBF[h_i(x_j)]$. Based on the property of GBF, $\sum_{i=1}^{k}ct.GBF[h_i(x_j)]$ is an encryption of associated value presented in server's database if $x_j=y_i$ for some $i\in[n]$. Then, the client uniformly selects mask $\delta \in \mathbb{F}_p$ and send   $\nu'=\phi-\mathtt{Enc}(pk, \delta)$  to the server, which decrypts $\nu'$ and obtains its share  $\rho$, where we can infer that $\delta+\rho=s_jg_i$ if $x_j=y_i$ for some $i\in[n]$. Note that if $x_j$ is not in the server's database $Y$, the above protocol gets a useless value which may be an arbitrary function of the server's database entries.

(c) \textbf{Secure Component Product Protocol}: This sub-protocol is used to compute the secret-shared component product. for each $x_j$ where $j\in [t]$, it enables the client and server to hold a shared share of $s_jg_i$, if $x_j=y_i$ for some $i\in[n]$, or shares of 0 otherwise. Specifically, both parties now  hold a share on $b$ and $\phi$ through the execution of the previous protocols. We translate the shares into the required output using 2 OT invocations:

 In the first OT, the client selects a random value $\Delta \in \mathbb{F}_p$. Then, it acts as the OT's sender with two inputs $m_0=\Delta+b_{P_0}\cdot \delta$ and $m_1=\Delta+(1-b_{P_0})\cdot \delta$. On the other hand,  the server as the OT's receiver inputs choice bit $b_{P_1}$, and then obtain $r$ from the OT's functionality. Clearly,  $r=\Delta+b\cdot \delta$ where $b=b_{P_0}\bigoplus b_{P_1}$. In the second $OT$, the server selects  random value $\alpha_j \in \mathbb{F}_p$, Then, it acts as the OT's sender with two inputs $m_0=r+b_{P_1}\cdot (\rho-\alpha_j)-(1-b_{P_1})\alpha_j$ and $m_1=r+(1-b_{P_1})\cdot (\rho-\alpha_j)-(b_{P_1})\alpha_j$. On the other hand, the client as the OT's receiver inputs choice bit $b_{P_0}$, and then obtain $r'$ from the OT's functionality. Clearly,  $r'=r+b\cdot (\rho-\alpha_j)+(1-b)\cdot\alpha_j$ where $b=b_{P_0}\bigoplus b_{P_1}$.

 Based on the two OTs, the client computes $r'-\Delta$, which implies that the output is exactly $\delta+\rho-\alpha_j$ if $b=1$. Otherwise, the output is $-\alpha_j$. Therefor, since the server holds $\alpha_j$, the two parties holds the secret shares of $\delta+\rho$ if $b=1$, or shares of $0$ otherwise.  To compute the shares of inner production, i.e., the share of $\sum_{i\in[t], j\in[n], x_i=y_j}s_ig_j$, it only requires two parties to sum up the resulting shares on a single component, respectively.

\subsubsection{\quad\; Security of S-SIP$_1$}
\label{Security of S-SIP$_1$}
Our S-SIP$_1$ is secure against the  honest-but-honest adversaries. We provide the following theorem.
\begin{theorem}
Let HE and OT used in the S-SIP$_1$  be secure against the  honest-but-honest adversaries. Then our  S-SIP$_1$ is secure against the  honest-but-honest client and server.
\end{theorem}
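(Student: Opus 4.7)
The plan is to prove the theorem in the standard simulation-based paradigm for two-party computation against semi-honest adversaries. I would construct, for each corruption case, a PPT simulator that takes only the corrupted party's input and output share and produces a view that is computationally indistinguishable from its view in a real execution of S-SIP$_1$, with indistinguishability reducing to the semantic security of HE and the simulation-security of the 1-out-of-$n$ OT.

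Case 1 (corrupted client $P_0$): I would build $\mathsf{Sim}_{P_0}$ that, on input $(X,S)$ and the output shares $\{\langle \mathtt{u}_i\rangle_0\}_{i\in[t]}$, proceeds as follows. In the offline phase, it honestly generates $pk$ and replaces $ct.BF[i]$ and $ct.GBF[i]$ with $\mathtt{Enc}(pk,0)$ for all $i\in[m]$; IND-CPA security of HE justifies this change by a standard hybrid over the $2m$ ciphertexts. In the online phase, for every $x_j$, the simulator invokes the OT simulator in the sender-privacy mode: in the Secure Membership Check OT, $\mathsf{Sim}_{P_0}$ feeds the OT functionality a freshly sampled bit $\tilde b_{P_0}$; in the two OTs of the Secure Component Product Protocol, it programs the simulated OT output so that $r'-\Delta$ equals $\langle \mathtt{u}_j\rangle_0$. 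Since the client's masks $\mu,\delta,\Delta$ are uniformly random in $\mathbb{F}_p$ and the OT receiver view is simulated from the ideal OT, indistinguishability follows by a hybrid over (HE ciphertexts, each of the three OT instances per query).

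Case 2 (corrupted server $P_1$): I would build $\mathsf{Sim}_{P_1}$ that, on input $(Y,G)$ and $\{\langle \mathtt{u}_i\rangle_1=\alpha_i\}$, honestly runs the offline phase and then, for each query $x_j$, simulates the online messages as follows. The ciphertexts $\nu$ and $\nu'$ sent by $P_0$ decrypt to $\eta-\mu$ and $\phi-\delta$ respectively; because $\mu,\delta$ are uniform in $\mathbb{F}_p$, these decryptions are themselves uniform and can be replaced by fresh uniform samples $\tilde\mu',\tilde\rho$ without any distinguishing advantage (a direct statistical equality, not even a computational step). For each of the three OT invocations the simulator calls the OT-sender simulator with the server's honestly chosen inputs, and this introduces at most a negligible gap by OT security. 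Because the server receives no OT output in step (a) and step (b), and in step (c) it only holds $\alpha_j$ which it sampled itself, the simulated view contains nothing that depends on $X,S$ beyond what is captured by the uniform masks.

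The main obstacle is step (c), the Secure Component Product Protocol, because the two chained OT invocations intertwine the share $b=b_{P_0}\oplus b_{P_1}$ produced in step (a) with the additive share $\delta+\rho$ produced in step (b). I would first verify correctness algebraically by substituting $r=\Delta+b\delta$ into the second OT output to obtain $r'=\Delta+b(\delta+\rho)+(1-b)\cdot 0$ after subtracting $\Delta$, which together with the server's $\alpha_j$ yields the required $0/s_jg_i$ shares, and then argue indistinguishability in a carefully ordered hybrid: (i) replace $ct.BF, ct.GBF$ by encryptions of zero (HE security), (ii) replace the membership-check OT by its ideal call (OT security), (iii) replace each of the two product-protocol OTs likewise, programming the simulated outputs on the corrupted side to match the target shares. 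Since at every hop one masked value ($\mu,\delta,\Delta,\alpha_j$) is uniformly random and independent of the other party's secret, the programmed outputs are distributed identically to the real ones, and the overall view is indistinguishable up to a negligible factor summing the HE and OT advantages, which proves the claim.
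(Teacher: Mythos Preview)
Your proposal is correct and follows essentially the same route as the paper: a simulation-based proof with a hybrid sequence for each corruption case, replacing the HE ciphertexts (by semantic security) and then each OT instance (by OT simulation security), while exploiting that the masks $\mu,\delta,\Delta,\alpha_j$ render the relevant plaintexts uniform. One small slip: in the server-corruption case you write that in step~(c) the server ``only holds $\alpha_j$,'' but in the first 1-out-of-2 OT of step~(c) the server is the \emph{receiver} and obtains $r=\Delta+b\cdot\delta$; this is easily handled since $\Delta$ is uniform and hence $r$ is uniform, but your simulator description should produce this value rather than skip it.
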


\begin{proof}
Let $\mathtt{\pi}_{\rm S-SIP}$ shown in Fig~\ref{Functionality S-SSI$_1$} be the functionality of of S-SIP$_1$. We demonstrate the security of S-SIP$_1$ against honest but curious adversaries with the simulation-based paradigm \cite{lindell2017simulate}.

\noindent\textbf{Semi-honest client security}.  We first analyze the case where adversary $\mathcal{A}$ compromises an honest but curious client. Specifically, we demonstrate the existence of such a polynomial-time simulation in $\mathtt{Sim_C}$, which is given access to the client's inputs and outputs. It simulates the client's view that is indistinguishable from the real view.

We show the indistinguishability between real and simulated views by the following hybrid arguments.
\begin{packeditemize}
\item  $\mathtt{Hyb_1}$: This corresponds to the real  protocol.
\item $\mathtt{Hyb_2}$: In this hybrid, instead of encrypting the original BF and GBF, the $\mathtt{Sim_C}$ randomly generates BF and GBF with the same length as the original, encrypts them with HE and sends them to the client. Because the client does not have the secret key $sk$ corresponding to the public key $pk$ of the HE, the semantic security of the HE guarantees the indistinguishability between this hybrid and the real view.
\item $\mathtt{Hyb_3}$: In this hybrid,  instead of following the real input, the $\mathtt{Sim_C}$  simulates the server  by randomly selecting a new bit $b_{P_1}'$ and setting all of the server's $k$ inputs in OT to $b_{P_1}'$. Since at the end  of the OT execution, client just gets a random bit at position $b_{P_0}$, the above modification just lets the client get another random bit that is indistinguishable from the original random bit. Therefore, the underlying cryptographic primitives of OT guarantee the indistinguishability of this hybrid from the real view.
\item $\mathtt{Hyb_4}$: In this hybrid,  instead of following the real input, the $\mathtt{Sim_C}$  simulates the server  by  setting  the server's  inputs in OT to  the outputs  corresponding the client in the $\mathtt{\pi}_{\rm S-SIP}$. This is possible because the $\mathtt{Sim_C}$  is allowed to access the output of the client in the ideal function. Due to the simulation-privacy of OT, this hybrid is indistinguishable from the real view.
\end{packeditemize}
\noindent\textbf{Semi-honest server security}. We now analyze the case where adversary $\mathcal{A}$ compromises an honest but curious server. Specifically, we demonstrate the existence of such a polynomial-time simulation in $\mathtt{Sim_S}$, which is given access to the server's inputs and outputs. It can simulate the server's view to make it indistinguishable from the real view.
\begin{packeditemize}
\item  $\mathtt{Hyb_1}$: This corresponds to the real  protocol.
\item $\mathtt{Hyb_2}$: In this hybrid, instead of computing $\sum_{i=1}^{k}ct.BF[h_i(x_j)]$ for each $x_j\in[t]$, the $\mathtt{Sim_S}$ encrypts randomly  strings and and sends it to server. Since in the real view, the ciphertext sent  to the server is homomorphically subtracted an random values uniformly chosen from $ \mathbb{F}_p$. Hence, the semantic security of the HE guarantees the indistinguishability between this hybrid and the real view.
 \item $\mathtt{Hyb_3}$:  In this hybrid, instead of computing $\sum_{i=1}^{k}ct.GBF[h_i(x_j)]$ for each $x_j\in[t]$, the $\mathtt{Sim_S}$ encrypts randomly  strings and and sends it to server.  Similarly, since in the real view, the ciphertext sent  to the server is homomorphically subtracted an random values uniformly chosen from $ \mathbb{F}_p$. Hence, the semantic security of the HE ensures the indistinguishability between this hybrid and  real view.
 \item $\mathtt{Hyb_4}$:  In this hybrid,  instead of following the real input, the $\mathtt{Sim_S}$  simulates the server  by  setting  the server's  inputs in OT  with two random strings. This stems from the fact that in the real view, the input to the server is two statistically uniform random strings, and the security of OT guarantees that the client receives one of the two strings and knows nothing about the other. As a result, we hold the same security by  substituting  the original input with two new random strings. Therefore, this hybrid  is indistinguishable from the real view.
\end{packeditemize}
\end{proof}

\section{The S-SIP$_2$ Construction}
\label{The S-SSI$_2$ Construction}
We now describe our second construction S-SIP$_2$, which is a fully online setting  without precomputing. We instantiate S-SIP$_2$ with state-of-the-art PIR technology as the underlying technology. As a result, this derives the client's overhead asymptotically linear to its own input and logarithmic to the size of the server's database. Thus, it shifts the vast majority of the protocol overhead from the client to the server, which is beneficial in real-world applications where the client is usually a resource-constrained device such as a mobile phone.

\subsection{Sum-PIR Functionality}
\label{sec:Sum-PIR Functionality}
S-SIP$_2$ is functionally identical to S-SIP$_1$, but removes the expensive offline phase of S-SIP$_1$, replacing it with standard private information retrieval queries. Recall that during the offline phase of S-SIP$_1$, the server is required to encrypt the BF and GBF containing all database entries, i.e.,  $ct.BF[i]=\mathtt{Enc}(pk, BF[i])$ and $ct.GBF[i]=\mathtt{Enc}(pk, GBF[i])$ for every  $i \in [m]$,  and send them to the client. For each query $x_j$ by the client, the client is required to homomorphically sum all entries corresponding to position $h_i(x_j)$, i.e., $\sum_{i=1}^{k}ct.BF[h_i(x_j)]$ and $\sum_{i=1}^{k}ct.GBF[h_i(x_j)]$,  then masks the results and sends them to the server. In S-SIP$_2$, we instead use PIR to obliviously query the server for entries located at $h_i(x_j)$, and receive the masked sum of the corresponding values at those locations in BF and GBF. If the client only needs to retrieve the entry at $h_i(x_j)$  without summing and masking, it only needs to utilize the standard symmetric PIR. Whereas in S-SIP$_2$  the client needs to sum the values of the $k$ positions that the hashes map to, we use a modified version of PIR \cite{mughees2021onionpir}, named Sum-PIR.

\begin{table}
\begin{tabular}{|p{8cm}|}
\Xhline{1pt}
\begin{center}
Sum-PIR
\end{center}
 \textbf{Input:}  The client   holds a set of $k$ indices $\{\zeta_1, \cdots \zeta_k\}$, while the server  holds dataset $DB$ of  size $m$.\\
 \textbf{Protocol:}\\
 1. The client generates the key pair $(pk, sk)$ with $\mathtt{PIR_{KeyGen}}$, and then sends $pk$ to the server. \\
 2.  For each $i\in[k]$, the client interacts with the server with multi-query PIR as follows:
\begin{packeditemize}
\item [a.]  The clients generates a query $q_i$ by $\mathtt{PIR_{Query}}(pk, \zeta_i)$, and sends it to the server.
\item [b.] The server generates the answer $d_i$ with $\mathtt{PIR_{Answer}}(pk, q_i, DB)$, and then computes $ct=\sum_{i=1}^{k}d_i$.
\item [c.] The server homomorphically computes $ct^{\ast}\leftarrow ct-\mathtt{Enc}(pk, r)$, where $r$ is a value chosen at random.
\item [d.]  The server sends $ct^{\ast}$ to the client.
\end{packeditemize}
3. The client executes $\mathtt{PIR_{Extract}}(sk, ct^{\ast})$ to obtain $v=\sum_{i=1}^{k}D[\zeta_i]-r$.
\\
\\\Xhline{1pt}
\end{tabular}
\caption{Construction of Sum-PIR}
\label{Construction of Sum-PIR}
\vspace{-20pt}
\end{table}

Fig~\ref{Construction of Sum-PIR} depicts the construction of Sum-PIR, which allows a client holding $k$ indices to interact with the server to obtain $\sum_{i=1}^{k}D[\zeta_i]-r$, where the $r$  is an additive  mask randomly chosen by the server. At the end of the protocol execution, the server has no knowledge of the indexes held by the client.
\begin{table*}[htb]
\begin{tabular}{|p{17.5cm}|}
\Xhline{1pt}
\begin{center}
Implementation of S-SIP$_2$
\end{center}
 \textbf{Input:}  The client (named $P_0$)  holds a set of $t$  queries $X=\{ x_1, \cdots x_t\}$ associated with $t$ values $S=\{s_1, \cdots, s_t\}$.     The server (named $P_1$) holds dataset of key-values pairs $(Y, G)=\{(y_1, g_1), \cdots, (y_n, g_n)\}$\\
 \textbf{Implementation:}
 \begin{packeditemize}
 \item [1.]\textbf{Setup Phase:}
 \begin{packeditemize}
\item $P_0$ and $P_1$ negotiate $k$ hash functions $\{h_1, \cdots, h_k\}$ where $h_i: \{0, 1\}^{\ast} \rightarrow [m]$ and $m$ represents the size of the BF that is enough to insert $n$ entries.
\item Using $k$ hash functions, $P_1$ inserts the set $Y$ containing keys $\{y_1, \cdots, y_n\}$ into BF, and also inserts set $(Y, G)=\{(y_1, g_1), \cdots, (y_n, g_n)\}$ containing key-value pairs into GBF. $P_1$ aborts if either insert operation fails.
\end{packeditemize}
\item [2.]\textbf{Online Phase:} $P_0$  interacts with $P_1$ to perform the following steps in parallel for every $x_j\in[t]$.
\begin{packeditemize}
\item [(a)] \textbf{Secure Membership Check Protocol}:
\begin{packeditemize}
\item  $P_1$ uniformly selects mask $\mu \in \mathbb{F}_p$ and then $P_1$ interacts with $P_0$ to execute a Sum-PIR query, where the inputs of $P_0$ are $h_1(x_j), \cdots, h_k(x_j)$ while $P_1$ use BF and $\mu $ as input.
\item $P_0$ obtains $\mu'=-\mu+\sum_{i=1}^{k}BF[h_i(x_j)]$ as output.
\item $P_0$ and $P_1$ invoke an instance of 1-out-of-($k+1$) OT:
\begin{packeditemize}
\item [-] $P_1$  selects a random bit $b_{P_1}$. Then, $P_1$ as the OT's sender sets its inputs  to $\{b_0, \cdots, b_k\}$, where each $b_i$ is equal to $b_{P_1}$, except that $b_{(k-\mu)\mod (k+1)}$ is set equal to $1\bigoplus b_{P_1}$.
\item [-] $P_0$ as the OT's receiver inputs choice $\mu'\mod (k+1)$, and then obtain $b_{P_0}$ from the OT's functionality.
\end{packeditemize}
\end{packeditemize}
\item[(b)]\textbf{Secure Associated Value Extraction Protocol}:
\begin{packeditemize}
\item  $P_0$  computes $\mathtt{Enc}(pk, s_j)$ and sends it to the server.
\item $P_1$ uniformly selects mask $\rho \in \mathbb{F}_p$ and then $P_1$ interacts with $P_0$ to execute a Sum-PIR query, where the inputs of $P_0$ are $h_1(x_j), \cdots, h_k(x_j)$ while $P_1$ use GBF and $\rho$ as input.
\item  Before adding the additive  mask  $\rho$ to the result $ct$ obtained by executing Sum-PIR, the server  homomorphically multiplies $ct$ with $\mathtt{Enc}(pk, s_j)$.
\item The server takes $\rho$ as output and the client receives $\delta=-\rho+s_j\cdot \sum_{i=1}^{k}GBF[h_i(x_j)]$ as output.
\end{packeditemize}
\item [(c)] \textbf{Secure Component Product Protocol}:
\begin{packeditemize}
\item  $P_0$ and $P_1$ invoke an instance of 1-out-of-$2$ OT:
\begin{packeditemize}
\item [-] $P_0$  selects a random value $\Delta \in \mathbb{F}_p$. Then, $P_0$ acts as the OT's sender with two inputs $m_0=\Delta+b_{P_0}\cdot \delta$ and $m_1=\Delta+(1-b_{P_0})\cdot \delta$.
\item [-] $P_1$ as the OT's receiver inputs choice bit $b_{P_1}$, and then obtain $r$ from the OT's functionality. Note that $r=\Delta+b\cdot \delta$ where $b=b_{P_0}\bigoplus b_{P_1}$.
\end{packeditemize}
\item  $P_0$ and $P_1$ invoke another instance of 1-out-of-$2$ OT:
\begin{packeditemize}
\item [-] $P_1$ selects  random value $\alpha_j \in \mathbb{F}_p$, Then, $P_1$ acts as the OT's sender with two inputs $m_0=r+b_{P_1}\cdot (\rho-\alpha_j)-(1-b_{P_1})\alpha_j$ and $m_1=r+(1-b_{P_1})\cdot (\rho-\alpha_j)-(b_{P_1})\alpha_j$.
\item [-] $P_0$ as the OT's receiver inputs choice bit $b_{P_0}$, and then obtain $r'$ from the OT's functionality. Note that $r'=r+b\cdot (\rho-\alpha_j)+(1-b)\cdot\alpha_j$ where $b=b_{P_0}\bigoplus b_{P_1}$.
\end{packeditemize}
\item $P_0$ computes $r'-\Delta$, which implies that the output is exactly $\delta+\rho-\alpha_j$ if $b=1$. Otherwise, the output is $-\alpha_j$.
\item Since $P_1$ holds $\alpha_j$, the two parties holds the secret shares of $\delta+\rho$ if $b=1$, or shares of $0$ otherwise.
\end{packeditemize}
\end{packeditemize}
\end{packeditemize}\\
\Xhline{1pt}
\end{tabular}
\caption{Implementation of S-SIP$_2$}
\label{Implementation of S-SIP$_2$}
\vspace{-20pt}
\end{table*}
\subsection{Technical Details of S-SIP$_2$}
\label{Technical Details of S-SSI$_2$}
 With the properties of Sum-PIR, we now describe the technical details of S-SIP$_2$.   Similar to S-SIP$_1$, S-SIP$_2$ can be divided into setup phase and online phase, wherein the online phase also contains three sub-protocols: \textbf{Secure Membership Check Protocol}, \textbf{Secure Associated Value Extraction Protocol} and \textbf{Secure Component Product Protocol}. Fig~\ref{Implementation of S-SIP$_2$} depicts the detailed technique for implementing S-SIP$_2$, and below we explain each step further.

 \subsubsection{\quad\; Setup Phase}
\label{Setup Phase}
This process only requires the server to initialize its own dataset. It is functionally different from precomputing in S-SIP$_1$,  since it does not require the server to encrypt the local database and send it to the client. Specifically, the server first generates $k$ hash functions $\{h_1, \cdots, h_k\}$,  and then inserts the set $Y$ containing keys $\{y_1, \cdots, y_n\}$ into BF with these hash functions.  The set $(Y, G)=\{(y_1, g_1), \cdots, (y_n, g_n)\}$  is also inserted  into GBF with the similar way.

\subsubsection{\quad\; Online Phase}
\label{Online Phase2}
The client interacts with the server to perform the following steps in parallel for every $x_j\in[t]$.

(a) \textbf{Secure Membership Check Protocol}: For each $x_j\in[t]$, the server first uniformly selects mask $\mu \in \mathbb{F}_p$. Then, the client interacts with  server to execute a Sum-PIR query, where the inputs of client are $h_1(x_j), \cdots, h_k(x_j)$ while the server use BF and $\mu $ as input. As a result, the client obtains  $\mu'=-\mu+\sum_{i=1}^{k}BF[h_i(x_j)]$ as output. Afterwards, the client and server invoke an instance of 1-out-of-($k+1$) OT as below: the server first  selects a random bit $b_{P_1}$. Then, the server as the OT's sender sets its inputs  to $\{b_0, \cdots, b_k\}$, where each $b_i$ is equal to $b_{P_1}$, except that $b_{(k-\mu)\mod (k+1)}$ is set equal to $1\bigoplus b_{P_1}$. The client as the OT's receiver inputs choice $\mu'\mod (k+1)$. At the end  of the OT execution, the functionality of OT ensures that the client gets the $b_{P_0}$, where $b_{P_0}\bigoplus b_{P_1}=1$ if $\mu+\mu'=k$, or $b_{P_0}\bigoplus b_{P_1}=0$, and the server gets nothing.

(b) \textbf{Secure Associated Value Extraction Protocol}: This sub-protocol is used to compute the secret-shared associated value. To achieve this,   the client first  computes $\mathtt{Enc}(pk, s_j)$ and sends it to the server. Then.  the server uniformly selects mask $\rho \in \mathbb{F}_p$ and interacts with the client to execute a Sum-PIR query, where the inputs of the client  are $h_1(x_j), \cdots, h_k(x_j)$ while the server use GBF and $\rho$ as input. As a result,  before adding an  additive  mask  $\rho$ to the result $ct$ obtained by executing Sum-PIR, the server  homomorphically multiplies $ct$ with $\mathtt{Enc}(pk, s_j)$. Finally,  the client receives $\delta=-\rho+s_j\cdot \sum_{i=1}^{k}GBF[h_i(x_j)]$ as output.

(c) \textbf{Secure Component Product Protocol}: This sub-protocol is used to compute the secret-shared component product. for each $x_j$ where $j\in [t]$, it enables the client and server to hold a shared share of $s_jg_i$, if $x_j=y_i$ for some $i\in[n]$, or shares of 0 otherwise.  Its workflow is exactly the same as the corresponding steps in S-SIP$_1$. We omit here to prevent redundancy.

\textit{Remark}: The efficiency of S-SIP$_2$ relies  heavily on the performance of the PIR query. Since  for each $x_j\in[t]$ in S-SIP$_2$, we need to perform $k$  queries to obtain the sum of the ciphertext at the corresponding position. This is very time consuming as the number of $t$ increases. To get rid of this dilemma, we design an optimized PIR for batch queries to speed up execution (shown in Fig~\ref{Implementation of  Optimized S-SIP$_2$}). The main idea of this comes from parting the server's database into $m$ bins. The specific partition operation can be done using Cuckoo hashing or 2-choice hashing. In this way, each bin contains only a small part of the database, which allows parties to evaluate S-SIP$_2$ bin-by-bin. The amount of data the server has to touch with each query is now just the entries mapped into the same bin as the client query, which is computationally more efficient. Variants of this idea have  been  used in previous work SealPIR \cite{angel2018pir}.
\begin{table}[htb]
\begin{tabular}{|p{7.5cm}|}
\Xhline{1pt}
\begin{center}
Implementation of the optimized S-SIP$_2$
\end{center}
\textbf{Parameters}:
\begin{packeditemize}
\item The server's dataset with size of $n$, associated values space $\mathbb{F}_p$, the number of queries $t$.
\item The S-SIP$_2$ primitive.
\item The maximum number of bins is $m$, where the maximum size of each bin in the server is $\beta$ while the size of the client is $\eta$.
\item The number of hash functions $k$
\end{packeditemize}
 \textbf{Input:}  The client (named $P_0$)  holds a set of $t$  queries $X=\{ x_1, \cdots x_t\}$ associated with $t$ values $S=\{s_1, \cdots, s_t\}$.     The server (named $P_1$) holds dataset of key-values pairs $\mathcal{G}=\{(y_1, g_1), \cdots, (y_n, g_n)\}$\\
 \textbf{Implementation:}
 \begin{packeditemize}
 \item [1.] $P_0$ partitions it items $\{ x_1, \cdots x_t\}$ into $m$ bins with the Cuckoo or 2-choice hashing scheme. Without loss of generality, we denote by $B_C[b]$ those items in the $b$-th bin of the client.

\item [2.] $P_1$ partitions it items $\{ y_1, \cdots y_n\}$ into $m$ bins with the $k$ hash functions. Without loss of generality, we denote by $B_S[b]$ those items in the $b$-th bin of the client.
 \item for each bin $b\in[m]$
\begin{packeditemize}
\item [(a)] $P_1$ computes $\mathcal{G}_b=\{(y_i, l_i)|(y_i, l_i)\in \mathcal{G}$, $y_i\in B_S[b]\}$. Besides, $P_1$ pads $\mathcal{G}_b$ to the maximum bin size $\beta$ with dummy pairs.
\item [(b)]$P_0$ and $P_1$  invoke an instance of S-SIP$_2$, where the inputs of each party is as follows:
\begin{packeditemize}
\item  $P_0$ takes a set of $\eta$ queries $\{v_i|v_i\in B_C[b]\}$ which is padded with dummy items to the size $\eta$. Similarly, there is a set of associated value $\{s_i|v_i\in B_C[b], s_i\in S\}$, which is also padded with dummy items to the size $\eta$.
\item $P_1$ takes the set $\mathcal{G}_b$ as the inputs.
\end{packeditemize}
\item[(c)] $P_0$ receives the S-SIP$_2$'s outputs.
\end{packeditemize}
\end{packeditemize}\\
\Xhline{1pt}
\end{tabular}
\caption{Implementation of optimized S-SIP$_2$}
\label{Implementation of  Optimized S-SIP$_2$}
\vspace{-20pt}
\end{table}

\subsubsection{\quad\; Security of S-SIP$_2$}
\label{Security of S-SIP$_2$}
Our S-SIP$_2$ is secure against the  honest-but-honest adversaries. We provide the following theorem.
\begin{theorem}
Let PIR and OT used in the S-SIP$_2$  be secure against the  honest-but-honest adversaries. Then our  S-SIP$_2$ is secure against the  honest-but-honest client and server.
\end{theorem}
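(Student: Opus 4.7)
The plan is to prove security via the simulation paradigm, mirroring the hybrid structure used for S-SIP$_1$ but substituting the homomorphic-encryption-of-BF/GBF hybrids with hybrids that rely on the privacy of Sum-PIR. I will construct two polynomial-time simulators, $\mathtt{Sim_C}$ and $\mathtt{Sim_S}$, each given access to the corresponding party's inputs and outputs in $\mathtt{\pi}_{\rm S-SIP}$, and argue view-indistinguishability through a short sequence of hybrids.

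For semi-honest client security, $\mathtt{Hyb_1}$ is the real execution. In $\mathtt{Hyb_2}$, $\mathtt{Sim_C}$ replaces each Sum-PIR invocation inside the Secure Membership Check Protocol with a call to its ideal functionality that simply delivers a uniformly random $\mu'\in\mathbb{F}_p$ to the client; indistinguishability follows from the privacy of Sum-PIR (inherited from the underlying PIR and HE) together with the fact that the server's mask $\mu$ perfectly hides $\sum_{i}BF[h_i(x_j)]$. In $\mathtt{Hyb_3}$, I apply the analogous replacement to Sum-PIR in the Secure Associated Value Extraction Protocol, and additionally use semantic security of HE to replace $\mathtt{Enc}(pk,s_j)$ sent to the server with an encryption of zero (the server never obtains $sk$). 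Hybrids $\mathtt{Hyb_4}$ and $\mathtt{Hyb_5}$ then patch the two OT instances of the Secure Component Product Protocol exactly as in the S-SIP$_1$ proof: $\mathtt{Sim_C}$ first substitutes the server's OT sender inputs with a fresh random bit $b_{P_1}'$, and then programs them so that the XOR of the shares recovered by the client equals the output dictated by $\mathtt{\pi}_{\rm S-SIP}$. Each step reduces to the sender-privacy of OT.

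For semi-honest server security, the roles of PIR and HE change: $\mathtt{Sim_S}$ must produce client-to-server messages without knowing the client's queries. In $\mathtt{Hyb_2}$, I replace each PIR query generated by the client inside the two Sum-PIR invocations with a PIR query to an arbitrary fixed index; indistinguishability follows directly from PIR query privacy. In $\mathtt{Hyb_3}$, the ciphertext $\mathtt{Enc}(pk,s_j)$ that the client sends in the Secure Associated Value Extraction Protocol is replaced by an encryption of zero, which is indistinguishable by semantic security of HE since the server does not hold $sk$. In $\mathtt{Hyb_4}$, I simulate the two OTs of the Secure Component Product Protocol by sampling the server's receiver bit $b_{P_1}$ uniformly and programming the client's OT sender messages as two statistically uniform field elements; the receiver-privacy of OT ensures that the server's view is indistinguishable from the real execution and that it learns only one of the two programmed messages.

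The main obstacle I anticipate is cleanly bridging each Sum-PIR step with the subsequent $1$-out-of-$(k{+}1)$ OT in a single hybrid. Because Sum-PIR already applies an additive mask $\mu$ chosen by the server, the joint distribution of $(\mu,\mu')$ observed by the server (through the OT sender inputs that depend on $\mu$) and by the client (through the Sum-PIR output) must remain consistent with the invariant $b_{P_0}\bigoplus b_{P_1}=1$ iff $x_j=y_i$ for some $i$. The argument therefore needs to be made jointly rather than componentwise in each hybrid. A secondary obstacle is lifting the proof to the batched variant of Fig~\ref{Implementation of  Optimized S-SIP$_2$}: it will require one extra hybrid that replaces the Cuckoo or 2-choice hashing assignment together with its dummy-padding, which introduces only a statistically negligible failure event from bin overflow that can be absorbed into the simulation error.
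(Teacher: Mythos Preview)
Your proposal follows essentially the same route as the paper: a simulation-based argument with a short hybrid chain that replaces the two Sum-PIR interactions (BF and GBF) by masked-random outputs on the client side and by dummy PIR queries on the server side, then invokes OT sender/receiver privacy for the remaining messages. Two small organizational slips are worth fixing: (i) the replacement of $\mathtt{Enc}(pk,s_j)$ by an encryption of zero belongs only in the \emph{server}-security sequence, not in $\mathtt{Sim_C}$'s hybrids---$\mathtt{Sim_C}$ already holds $s_j$ and this ciphertext is an outgoing client message, so there is nothing to hide from the client; (ii) your description of $\mathtt{Hyb_4}$ (``fresh random bit $b_{P_1}'$'') is really the hybrid for the $1$-out-of-$(k{+}1)$ OT in the Membership Check, not for the Component Product OTs; the paper keeps these as separate hybrids (its $\mathtt{Hyb_3}$ and $\mathtt{Hyb_5}$ on the client side). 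Your extra remarks on maintaining the joint $(\mu,\mu')$ invariant across the Sum-PIR/OT boundary and on handling the batched variant with bin-overflow error are more careful than anything the paper spells out.
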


\begin{proof}
Let $\mathtt{\pi}_{\rm S-SIP}$ shown in Fig~\ref{Functionality S-SSI$_1$} be the functionality of S-SIP$_2$. We demonstrate the security of S-SIP$_2$ against honest but curious adversaries with the simulation-based paradigm.

\noindent\textbf{Semi-honest client security}.  We first analyze the case where adversary $\mathcal{A}$ compromises an honest but curious client. Specifically, we demonstrate the existence of such a polynomial-time simulation in $\mathtt{Sim_C}$, which is given access to the client's inputs and outputs. It can simulate the client's view to make it indistinguishable from the real view.

We show the indistinguishability between real and simulated views by the following hybrid arguments.
\begin{packeditemize}
\item  $\mathtt{Hyb_1}$: This corresponds to the real  the protocol.
\item $\mathtt{Hyb_2}$: In this hybrid, instead of executing Sum-PIR with real BF held on the server, the $\mathtt{Sim_C}$ randomly generates BF  and  $\mu \in \mathbb{F}_p$  with the same length as the original. $\mathtt{Sim_C}$ then interacts with the client to perform Sum-PIR queries. Since the output of Sum-PIR is masked before sending to the client, the security of Sum-PIR guarantees  the indistinguishability between this result and the one actually obtained.

\item $\mathtt{Hyb_3}$: In this hybrid,  instead of following the real input, the $\mathtt{Sim_C}$  simulates the server  by randomly selecting a new bit $b_{P_1}'$ and setting all of the server's $k$ inputs in OT to $b_{P_1}'$. Since at the end  of the OT execution, client just gets a random bit at position $b_{P_0}$, the above modification just lets the client get another random bit that is indistinguishable from the original random bit. Therefore, the underlying cryptographic primitives of OT guarantee the indistinguishability of this hybrid from the real view.
\item $\mathtt{Hyb_4}$: In this hybrid, instead of executing Sum-PIR with real GBF held on the server, the $\mathtt{Sim_C}$ randomly generates GBF  and  $\rho \in \mathbb{F}_p$  with the same length as the original. $\mathtt{Sim_C}$ then interacts with the client to perform Sum-PIR queries. Since the output of Sum-PIR is masked before sending to the client, the security of Sum-PIR guarantees  the indistinguishability between this result and the one actually obtained.
\item $\mathtt{Hyb_5}$: In this hybrid,  instead of following the real input, the $\mathtt{Sim_C}$  simulates the server  by  setting  the server's  inputs in OT to  the outputs  corresponding the client in the $\mathtt{\pi}_{\rm S-SIP}$. This is possible because the $\mathtt{Sim_C}$  is allowed to access the output of the client in the ideal function. Due to the simulation-privacy of OT, this hybrid is indistinguishable from the real view.
\end{packeditemize}
\noindent\textbf{Semi-honest server security}. We now analyze the case where adversary $\mathcal{A}$ compromises an honest but curious server. Specifically, we demonstrate the existence of such a polynomial-time simulation in $\mathtt{Sim_S}$, which is given access to the server's inputs and outputs. It can simulate the server's view to make it indistinguishable from the real view.
\begin{packeditemize}
\item  $\mathtt{Hyb_1}$: This corresponds to the real  protocol.
\item $\mathtt{Hyb_2}$: In this hybrid, instead of executing Sum-PIR with real input held on the client, the $\mathtt{Sim_S}$ encrypts $0$s  with the same length as the original. $\mathtt{Sim_S}$ then sends it to the server  to perform Sum-PIR queries. Since the server does not have the secret key $sk$ corresponding to the public key $pk$ used in $\mathtt{Sim_S}$, the semantic security of the Sum-PIR ensures the indistinguishability between this hybrid and real view.
 \item $\mathtt{Hyb_3}$:  In this hybrid, instead of executing Sum-PIR with real $s_j$ held on the client, the $\mathtt{Sim_S}$ randomly generates a new random $s_j'$   with the same length as the original. The security of Sum-PIR guarantees  the indistinguishability between this result and the one actually obtained.
 \item $\mathtt{Hyb_4}$:  In this hybrid,  instead of following the real input, the $\mathtt{Sim_S}$  simulates the server  by  setting  the server's  inputs in OT  with two random strings. This stems from the fact that in the real view, the input to the server is two statistically uniform random strings, and the security of OT guarantees that the client receives one of the two strings and knows nothing about the other. As a result, we hold the same security by  substituting  the original input with two new random strings. Therefore, this hybrid  is indistinguishable from the real view.
\end{packeditemize}
\end{proof}

\section{Performance Evaluation}
\label{Performance Evaluation}
In this section we discuss the performance of the two proposed constructs, S-SIP$_1$ and S-SIP$_2$. We use the work ROOM  as the baseline for comparison, as it is consistent with our motivation to design general-purpose secure sparse linear algebra. Below we first analyze the overhead of our schemes and ROOM \cite{schoppmann2019make} \footnote{Codes are available at https://github.com/schoppmp/room-framework} for performing sparse inner products on different sizes of dataset, and then compare the overhead of the two for performing different machine learning tasks including K-nearest neighbors, logistic regression, and naive Bayes classification.

\subsection{Implementation Details}
\label{Implementation Details}
We use SEAL \cite{sealcrypto} to implement homomorphic encryption for BF and GBF, where the polynomial dimension on the ring is set to 2048 and the ciphertext space parameter is $2^{46}$. It provides 128-bit security. We adopt OnionPIR \cite{mughees2021onionpir} as the underlying structure for constructing S-SIP$_2$. The realization of ROOM follows all the implemtation described in their paper. It uses Obliv-C \cite{zahur2015obliv} to implement the garbled circuit to construct a general secure two-party protocol, and Pseudo-random functions are constructed through the implementation of AES-128 \cite{schoppmann2019make}. Our experiments are carried out in both the LAN and WAN settings. LAN is implemented with two workstations in our lab. The client workstation has AMD EPYC 7282 1.4GHz CPUs with single core and 8GB RAM. The server workstation has Intel(R) Xeon(R) E5-2697 v3 2.6GHz CPUs with 28 threads on 14 cores and 64GB RAM. The WAN setting is based on a connection between a local PC and an Amazon AWS server with an average bandwidth of 963Mbps and running time of around 35ms.

\renewcommand\tablename{TABLE}
\renewcommand \thetable{\Roman{table}}
\setcounter{table}{0}
\setcounter{figure}{5}
\begin{table}[htb]
\centering
\caption{Real-world datasets used in the experiments}
\label{Real-world datasets used in the experiments}\
\setlength{\tabcolsep}{0.4mm}{
\begin{tabular}{|c|c|c|c|c|}
\Xhline{1pt}
{Dataset}&Documents&Classes &Nonzero Features(aveg.)&Total Features\\
\Xhline{1pt}
Movies&14341&2&136&95626\\
\hline
Newsgroups&9051&20&98&101631\\
\hline
Languages, ngrams=1&783&11&43&1033\\ \hline
Languages, ngrams=2&783&11&231&9915\\
\Xhline{1pt}
\end{tabular}}
\vspace{-10pt}
\end{table}
Consistent with ROOM, we chose three typical datasets (i.e. Movies \cite{maas2011learning}, Newsgroups \cite{albishre2015effective}, Languages \cite{Scikitlearn} with ngrams$=1$ and $2$, respectively) to implement ML tasks including $k$-nearest neighbors, logistic regression, and naive bayes classification in a privacy-preserving manner. Please refer to TABLE~\ref{Real-world datasets used in the experiments} for the specific size and sparsity of the dataset, and see ROOM for more details on the usage of these dataset.
\subsection{Performance of Executing Sparse Inner Products}
\label{Performance of Executing Sparse Inner Products}

\begin{table*}[!htbp]
\centering
\caption{Cost of SIP with different size of datasets}
\label{Cost of SIP with Different Size of Datasets}
\begin{tabular}{|cc|c|c|c|c|c|c|c|c|}
\Xhline{1pt}
\multicolumn{2}{|c|}{\multirow{2}*{Parameters}}&\multicolumn{4}{c|}{S-SIP$_1$}&\multicolumn{2}{c|}{S-SIP$_2$}& \multicolumn{2}{c|}{ROOM}\\\cline{3-10}
&&\multicolumn{2}{c|}{Offline}&\multicolumn{2}{c|}{Online}&\multicolumn{2}{c|}{Online}& \multicolumn{2}{c|}{Online}\\\cline{1-10}
$n$&$t$&Comm.(MB)&Time (s)&Comm.(MB)&Time&Comm.(MB)&Time&Comm.(MB)&Time (s)\\\cline{1-10}
\multirow{3}*{$2^{16}$}&$2^{8}$&$29$&$9.18$&$7$&$0.62$&$27$&$12.1$&$56$&$21.2$ \\ \cline{3-10}
&$2^{12}$&$29$&$9.18$&$112$&$4.22$&$120$&$84.23$&$863$&$339.9$ \\ \cline{3-10}
&$2^{16}$&$29$&$9.18$&$1794$&$27.5$&$801$&$623.2$&$13788$&$5439.4$ \\ \cline{1-10}
\multirow{3}*{$2^{18}$}&$2^{8}$&$116$&$36.8$&$7$&$0.62$&$29$&$16.4$&$71$&$31.2$ \\ \cline{3-10}
&$2^{12}$&$116$&$36.8$&$112$&$4.22$&$213$&$132.8$&$878$&$495.7$ \\ \cline{3-10}
&$2^{16}$&$116$&$36.8$&$1794$&$27.5$&$1821$&$1008$&$13837$&$7929.8$ \\ \cline{1-10}
\multirow{3}*{$2^{20}$}&$2^{8}$&$465$&$146.9$&$7$&$0.62$&$44$&$31.74$&$91$&$45.3$ \\ \cline{3-10}
&$2^{12}$&$465$&$146.9$&$112$&$4.22$&$379$&$231.7$&$1401$&$724.9$ \\ \cline{3-10}
&$2^{16}$&$465$&$146.9$&$1794$&$27.5$&$3704$&$1691.6$&$14391$&$11598.7$ \\ \Xhline{1pt}
\end{tabular}
\vspace{-15pt}
\end{table*}

\begin{figure*}[htb]
  \centering
  \subfigure[]{\label{6a}\label{Reprint_accuracy}\includegraphics[width=0.49\textwidth]{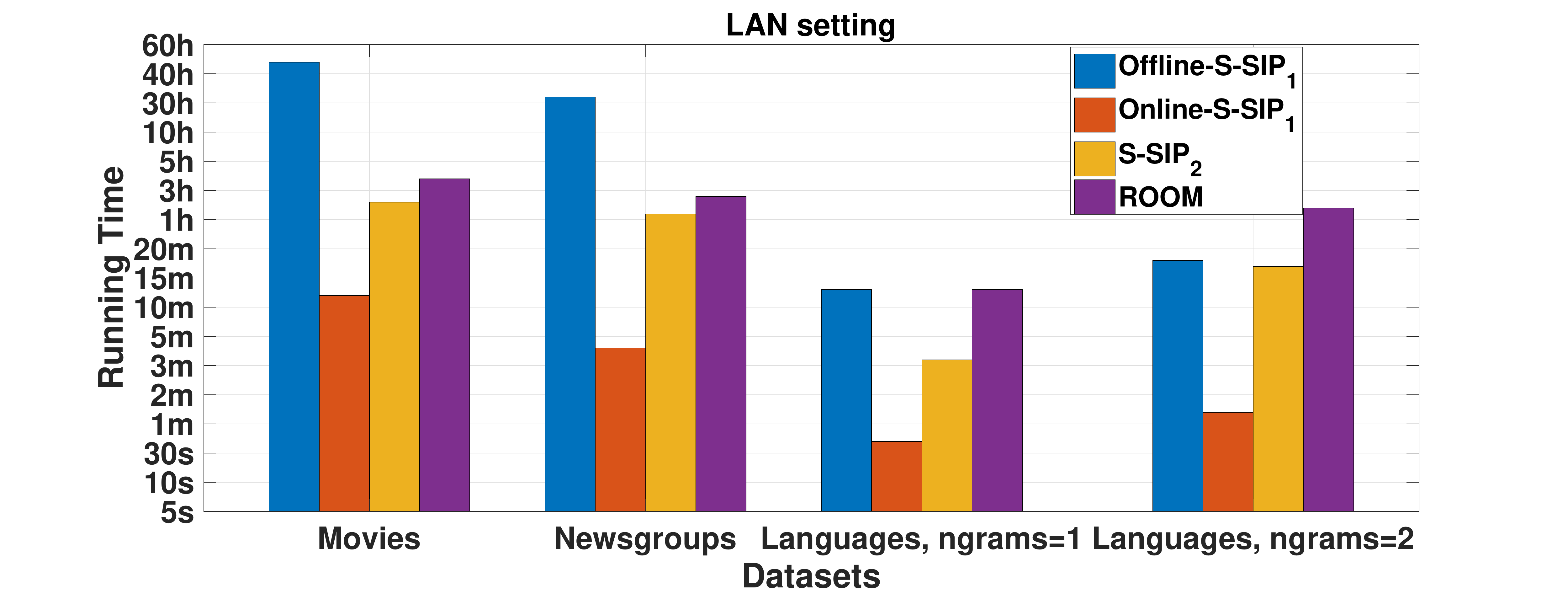}}
 \subfigure[]{\label{6b}\includegraphics[width=0.49\textwidth]{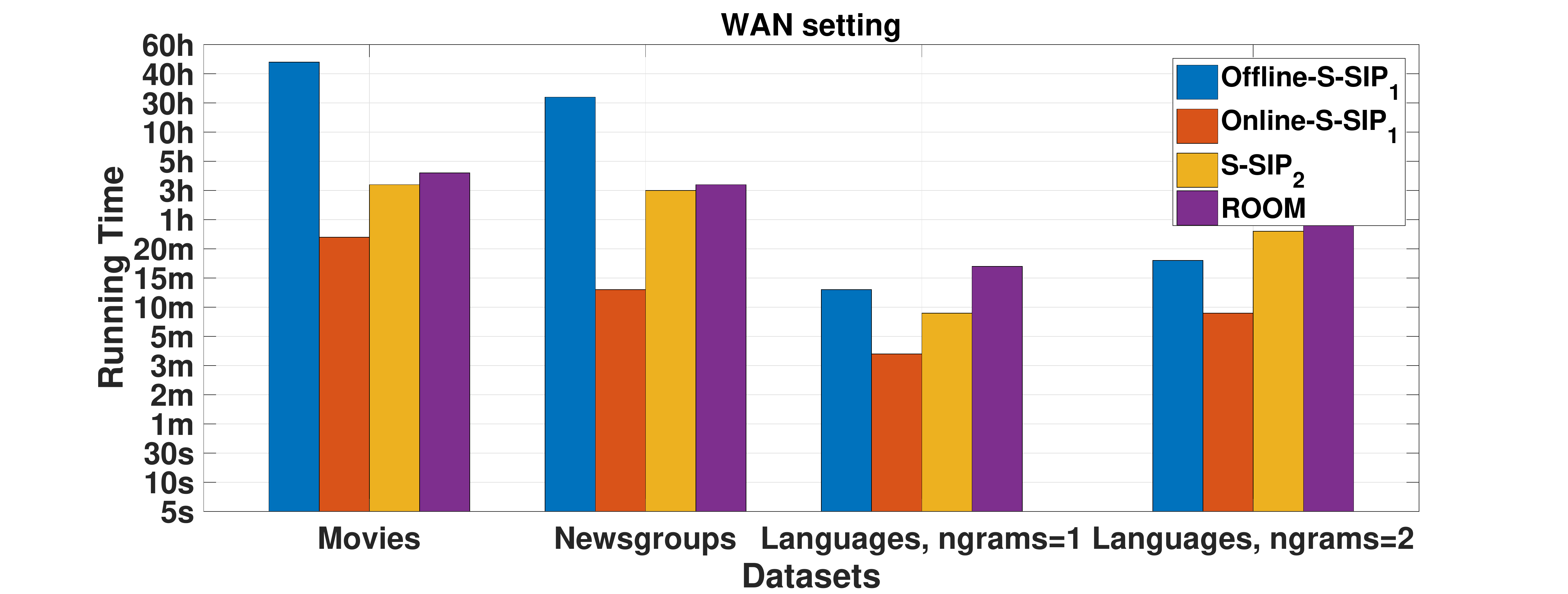}}
  \caption{Running time of kNN on different datasets. (a) LAN setting.  (b) WAN setiing. }
  \label{runtime on KNN}
  \vspace{-20pt}
\end{figure*}

\begin{figure}[htb]
\centering
\includegraphics[width=0.5\textwidth]{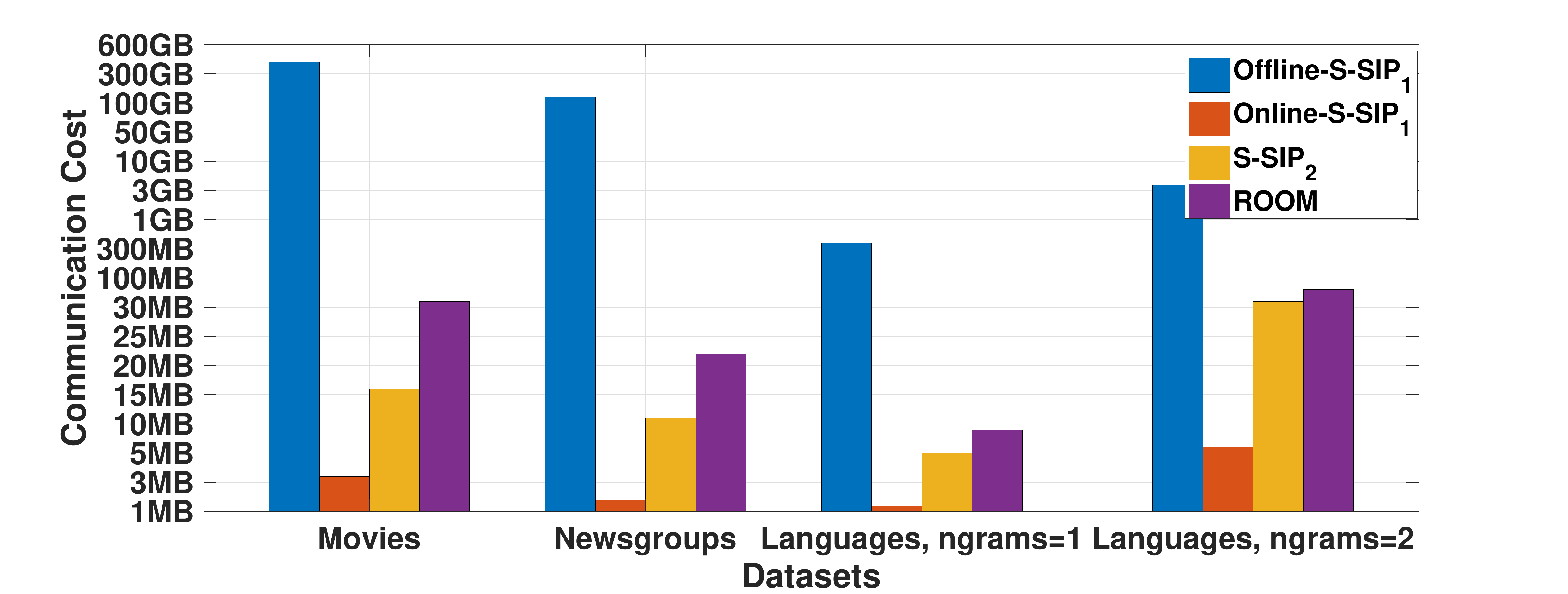}
\caption{Communication cost of kNN  on different datasets}
\label{Fig:KNN_comm}
 \vspace{-20pt}
\end{figure}

\begin{figure*}
  \centering
  \subfigure[]{\label{6a}\label{Reprint_accuracy}\includegraphics[width=0.49\textwidth]{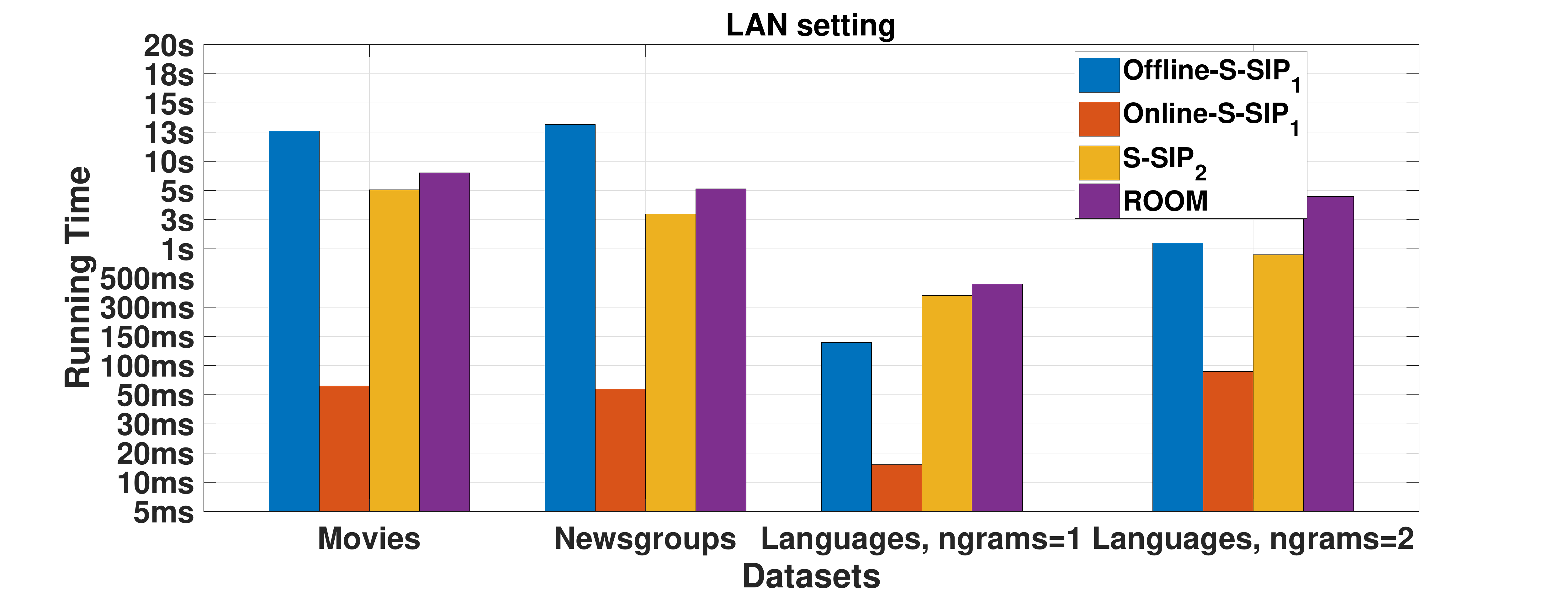}}
 \subfigure[]{\label{6b}\includegraphics[width=0.49\textwidth]{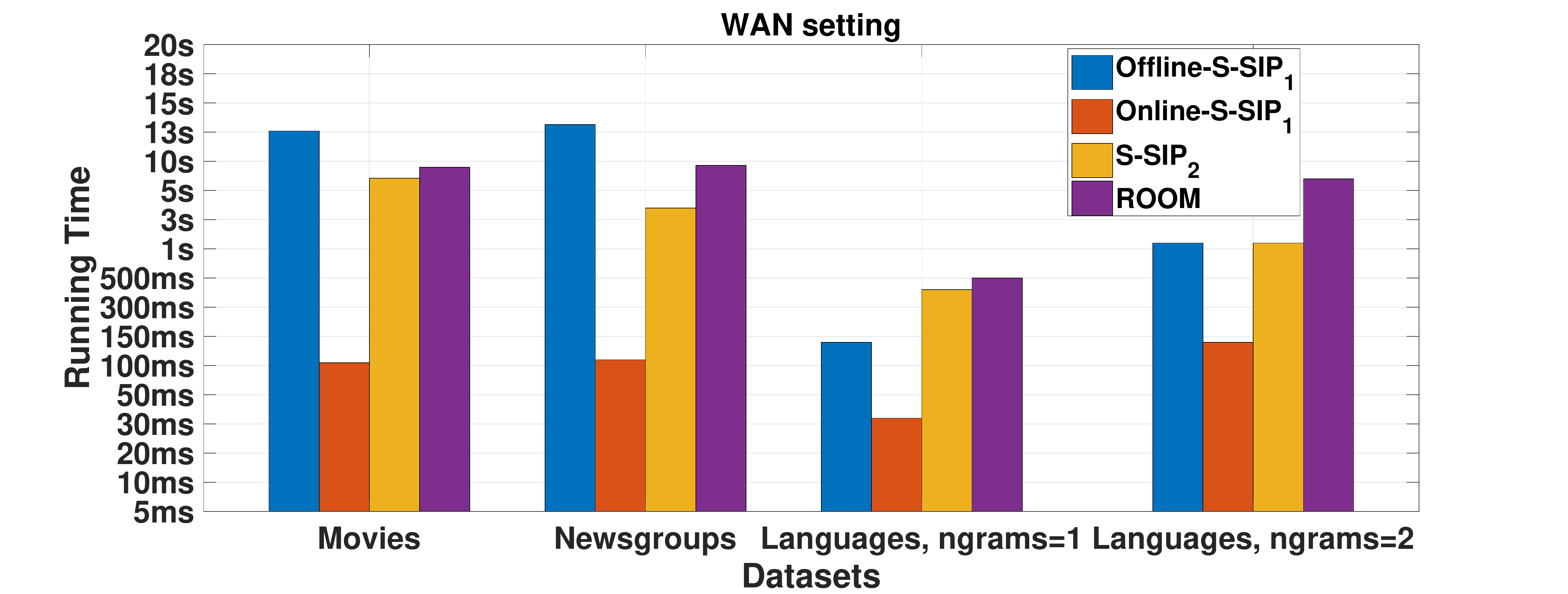}}
  \caption{Running time of logistic regression on different datasets. (a) LAN setting.  (b) WAN setiing. }
  \label{runtime on LG}
  \vspace{-20pt}
\end{figure*}

\begin{figure}
\centering
\includegraphics[width=0.5\textwidth]{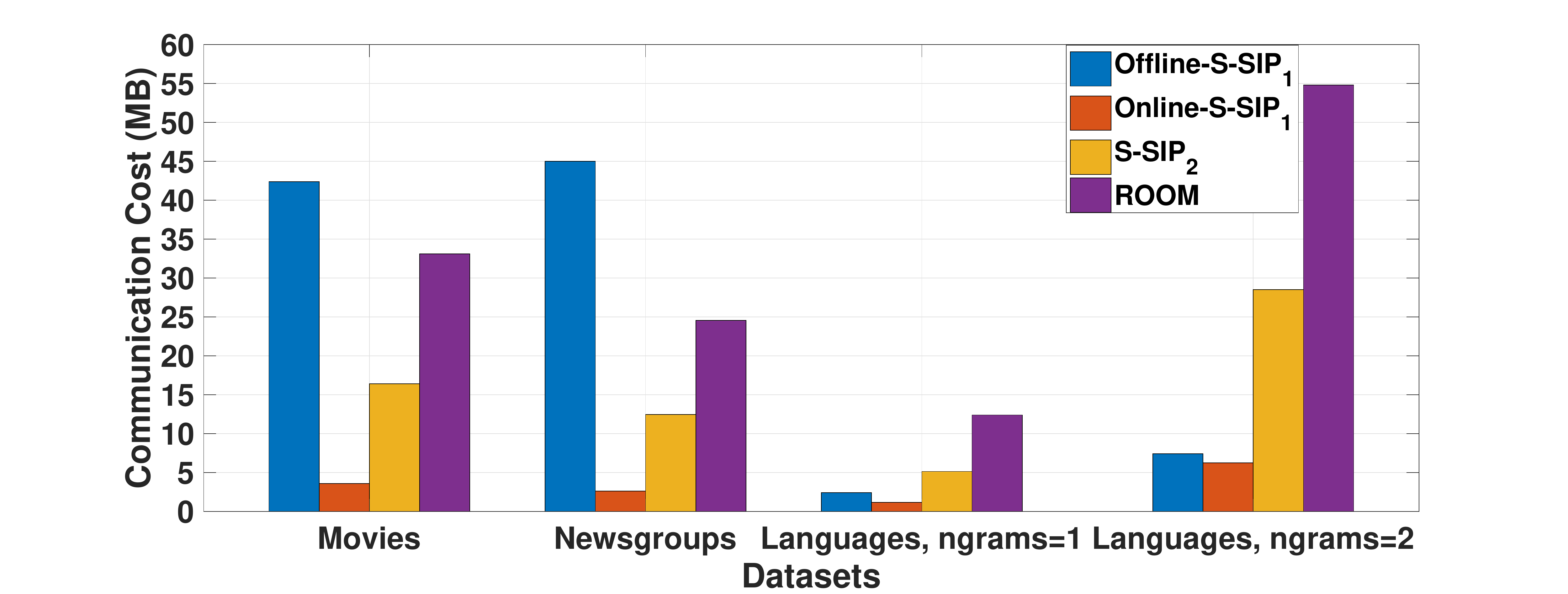}
\caption{Communication cost of logistic regression  on different datasets}
\label{Fig:LG_comm}
 \vspace{-20pt}
\end{figure}

\begin{figure*}[h]
  \centering
  \subfigure[]{\label{10a}\label{Reprint_accuracy}\includegraphics[width=0.49\textwidth]{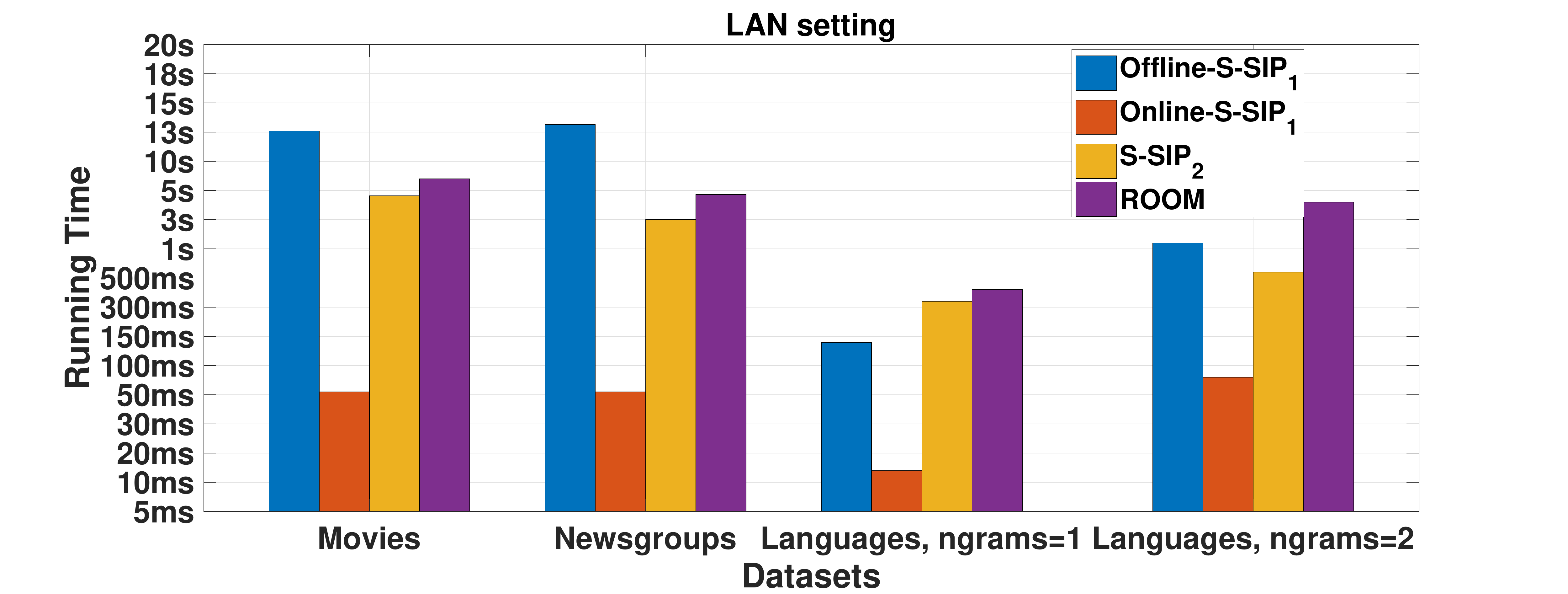}}
 \subfigure[]{\label{10b}\includegraphics[width=0.49\textwidth]{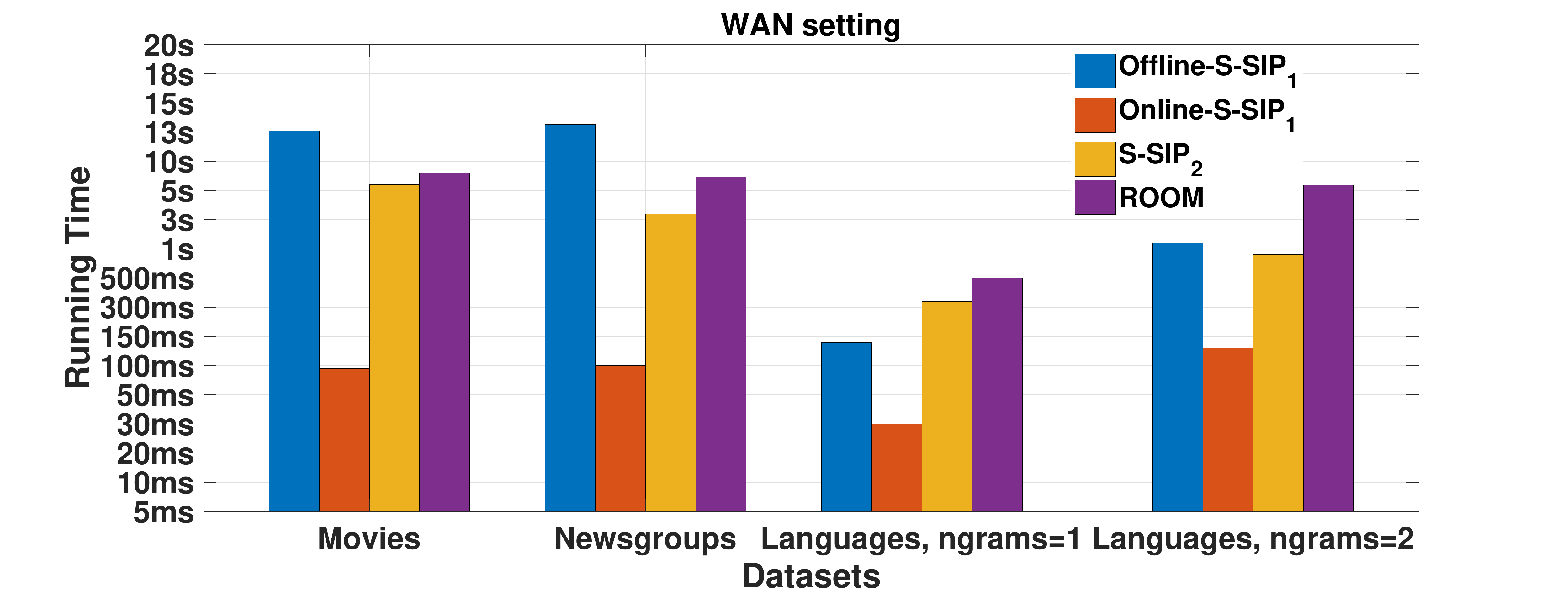}}
  \caption{Running time of naive bayes classification on different datasets. (a) LAN setting.  (b) WAN setiing. }
  \label{runtime on NB}
  \vspace{-15pt}
\end{figure*}

\begin{figure}[htb]
\centering
\includegraphics[width=0.5\textwidth]{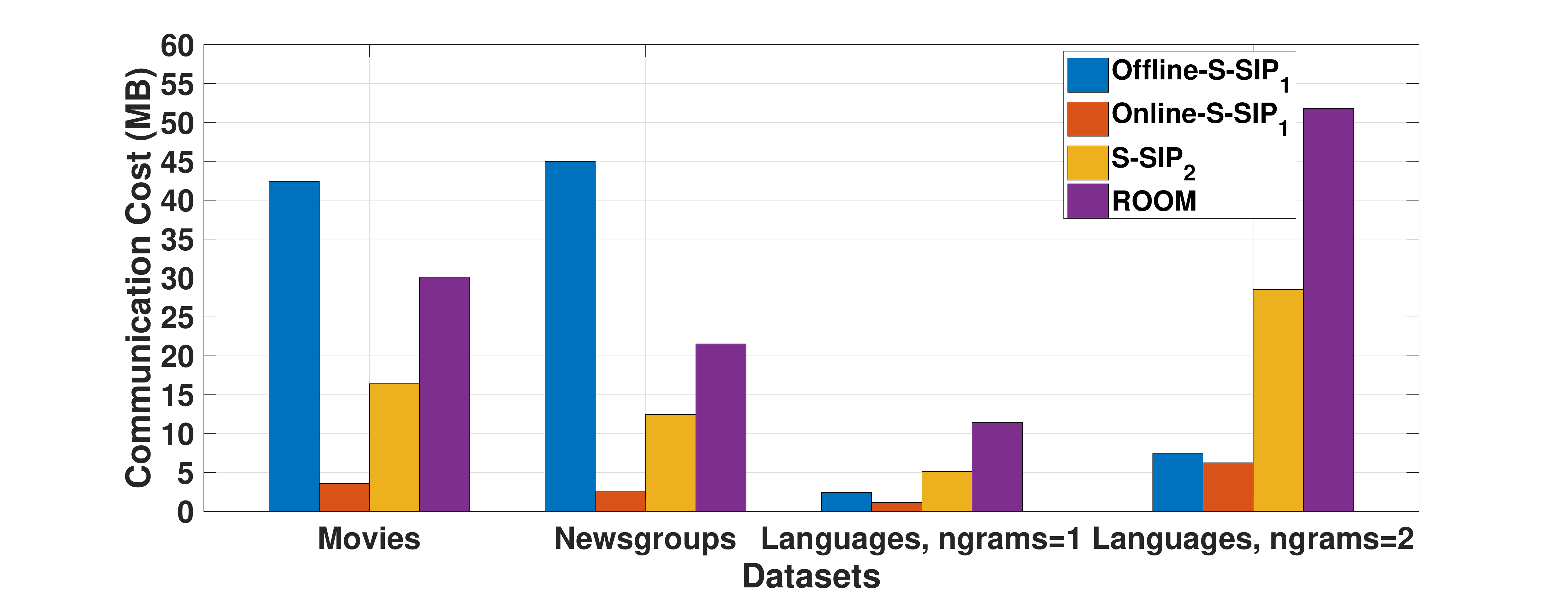}
\caption{Communication cost of naive bayes classification  on different datasets}
\label{Fig:NB_comm}
\vspace{-20pt}
\end{figure}

We first analyze the overhead of each scheme under different dataset sizes. TABLE~\ref{Cost of SIP with Different Size of Datasets} shows the comparison of the computational and communication costs of S-SIP$_1$, S-SIP$_2$ and ROOM under different variables, where the size of the server's database ranges from  $2^{16}$ to $2^{20}$, and the client's from $2^8$ to $2^{16}$. We observe that S-SIP$_1$ relies on heavy overhead for offline, which is linear with the size of the server's database. As a result, the online phase of S-SIP$_1$ is completely independent of the size of the server database, resulting in the best computational speedup of all schemes. For example, when the server holds entries of size $2^{20}$ and the client holds $2^{12}$ entries, performing such a secure inner product operation  S-SIP$_1$ takes only $4.22$ seconds, whileS-SIP$_2$ and ROOM requires $231.7$ and $724.5$ seconds. Moreover, in the online phase, the superiority of the communication overhead saved by of S-SIP$_1$ is evident to increase with the data held by the client. It stems from the fact that the online traffic of S-SIP$_1$  is independent of the server's dataset, while the other two methods are positively related to the server's input.

In the online phase, the computational overhead of of S-SIP$_2$ is higher than that of S-SIP$_1$, since it does not require any precomputation. It is worth noting that when the database held by the server is small, the communication overhead of S-SIP$_2$ is smaller than that of S-SIP$_1$ at certain times. For example, when $n=2^{16}$, $t=2^{16}$, S-SIP$_1$ incurs $1794$ (MB) of traffic while S-SIP$_2$ is about half of S-SIP$_1$. This stems from the Sum-PIR used in S-SIP$_2$, which derives a sublinear communication complexity relative to $t$, while the communication overhead of S-SIP$_1$ increases linearly with $t$. It is clear that S-SIP$_2$ is superior to ROOM in terms of computational and communication overhead. This is mainly due to a series of optimization methods of S-SIP$_2$ for sparse inner product operations, including customized OT protocols to minimize communication overhead, and optimized state-of-the-art PIR technology to accelerate computing. ROOM relies heavily on general-purpose secure multi-party computation to compute the intersection of two datasets, and requires  a large number of Beaver triples to implement multiplication privately. This incurs non-trivial computational and communication costs. As an example, ROOM takes 14396 (MB) and 11598 seconds to complete a secure sparse inner product between one $2^{16}$-dimensional vector and another $2^{20}$-dimensional vector. Conversely, S-SIP$_2$ takes only $3704$ (MB) and 1691.6 seconds, achieving a speedup of 3.89$\times$ and 6.85$\times$, respectively.

\subsection{Performance of Executing $k$-Nearest Neighbors}
\label{Performance of Executing $k$-Nearest Neighbors}

We now discuss the overhead of S-SIP$_1$ and S-SIP$_2$ in performing real ML tasks.  We first consider a $k$-Nearest Neighbor (kNN) task involving a server and a client, where the client holds a labeled database $D$ and the client holds a data $d$ to be classified. In kNN, (a) for each $p\in D$, the client needs to interact with the server to calculate the similarity between the two (the vast majority of the overhead in this process is vector-matrix multiplication); (b) Then, assigning a class $c_d$ to $d$ as the result of a majority vote among the classes of the
$k$ most similar documents according to the similarities computed in step (a) (see work \cite{schoppmann2019make} for details of $k$-Nearest Neighbor). Consistent with work ROOM, we use cosine similarity to calculate the similarity between matrices and vectors, which is essentially a series of  sparse  inner products, which can be easily implemented with S-SIP$_1$ and S-SIP$_2$ as the underlying structure. As for majority voting (step b) to achieve classification for a given input, we follow ROOM's approach with the same secure two-party computation protocol as the carrier.

Fig ~\ref{runtime on KNN} and Fig~\ref{Fig:KNN_comm} show the overhead of each scheme on different datasets. We observe that S-SIP$_1$ incurs heavy offline computation, and the overhead  is linear to the size of the dataset. However, this is done only once, that is, offline one-time pre-computation can support unlimited online queries, resulting in excellent online amortization overhead. For example,  S-SIP$_1$ only needs 12.3(min) and 3.8 (MB) of traffic to classify a single document. Compared with S-SIP$_2$ and ROOM, it saves up to $8 \times$ communication overhead, and brings at least $11 \times$ speedup of computing. It benefits from the design of  S-SIP$_1$ for the online phase, which mainly involves the execution of several efficient OT protocols without computationally intensive homomorphic evaluation.

 Since without any precomputation, the execution cost of S-SIP$_2$ is higher than S-SIP$_1$, but significantly lower than ROOM due to the custom design for SIP. Compared with ROOM, we observe that S-SIP$_2$ achieves at least a $2\times$ improvement in both communication and computing performance. This is due to the customized design of computational SIP in S-SIP$_2$, including partitioning PIR queries and efficient OT executions. On the contrary, ROOM  relies heavily on garbled circuits to execute SIP, which is computationally expensive since even performing simple arithmetic operations requires building circuits containing tens of thousands of ANDs.

\subsection{Performance of Executing Logistic Regression}
\label{Performance of Executing Logistic Regression}

We further discuss the cost comparison of each scheme on logistic regression. We also consider a two-party logistic regression scenario involving a server and a client, where the server holds a classification model and the client holds the input to be classified (see work \cite{schoppmann2019make} for details of  Logistic Regression). At the high-level view,  logistic regression mainly includes two types of computation, one is the inner product operation between the model parameters and the input features, and the other is the execution of an activation function such as Sigmoid. The former can be easily implemented with S-SIP$_1$ and S-SIP$_2$. As for the latter, we follow ROOM's approach, which  performs polynomial fitting on the sigmoid and then encapsulate it in a garbled circuit for private execution.

Fig~\ref{runtime on LG} and Fig~\ref{Fig:LG_comm} show the overhead of each scheme on different datasets. Consistent with the previous one, S-SIP$_1$ shows the best performance in the online phase, although this requires non-trivial precomputation. This makes S-SIP$_1$ ideal for ML scenarios where the server holds small and fixed datasets such as trained ML models. For example,  S-SIP$_1$ only needs 57.1(ms) and 3.6 (MB) of traffic to classify a single document. Compared with S-SIP$_2$ and ROOM, it saves up to $9 \times$ communication overhead, and brings at least $8752  \times$ speedup of computing.  As discussed above, it benefits from the design of  S-SIP$_1$ for the online phase, which mainly involves the execution of several efficient OT protocols without computationally intensive homomorphic evaluation. The execution cost of S-SIP$_2$ is higher than S-SIP$_1$, but significantly lower than ROOM due to the custom design for SIP. We use state-of-the-art PIR technology as the underlying technology and extend it to batch queries mode to reduce overhead through amortization. S-SIP$_2$ enables the overhead of S-SIP to be linear to the client's input (smaller) and logarithmic to the size of the server's dataset. On the contrary, ROOM  relies heavily on garbled circuits to execute SIP, which is computationally expensive.

\subsection{Performance of Executing Naive Bayes Classification}
\label{Performance of Executing Naive Bayes Classification}

We finally consider the naive bayesian classification scenario consisting of a server and a client, where the server holds the database $D$ and the client holds the input features $d$. This scenario includes two processes, (a) one is to calculate the intersection between $d$ and  $D$ on features, (b) and the other is to use Bayesian probability for classification (see work \cite{schoppmann2019make} for details). Since ROOM has no code to implement the latter (i.e, step (b)), in keeping with it, here we only discuss the overhead of securely computing the former.

Fig~\ref{runtime on NB}  and Fig~\ref{Fig:NB_comm} show the overhead of each scheme on different datasets. The results on this experiment are similar to those on the logistic regression task because they perform similar operations under ciphertext. Apparently, S-SIP$_1$ still shows an advantage over the other two methods. The execution cost of S-SIP$_2$ is higher than S-SIP$_1$, but significantly lower than ROOM due to the custom design for SIP.  Compared to ROOM, there are at least a $2\times$ improvement in both communication and computing performance. This is due to the customized design of computational SIP in S-SIP$_2$. Our approach exploits state-of-the-art cryptography tools including garbled Bloom filters and Private Information Retrieval (PIR) as the cornerstone, but carefully fuses them to obtain non-trivial overhead reductions.

\section{Conclusion}
\label{sec:conclusion}
In this paper,  we propose two concrete constructs, S-SIP$_1$ and S-SIP$_2$.  Our approach exploits state-of-the-art cryptography tools including garbled Bloom filters and Private Information Retrieval (PIR) as the cornerstone, but carefully fuses them to obtain non-trivial overhead reductions. We provide a formal security analysis of the proposed constructs and  implement them  into representative machine learning algorithms including k-nearest neighbors, naive Bayes and logistic regression. Compared to the  existing efforts, our method achieves $2$-$50\times$ speedup in runtime and up to $10\times$ reduction in communication. 
In the future, we will focus on designing more efficient optimization strategies to further reduce the computation overhead of our constructions, to make it more suitable for practical applications.
\ifCLASSOPTIONcaptionsoff
\newpage \fi
\bibliographystyle{IEEEtran}
\bibliography{PPDR}
\begin{IEEEbiography}[{\includegraphics[width=1in,height=1.25in,clip,keepaspectratio]{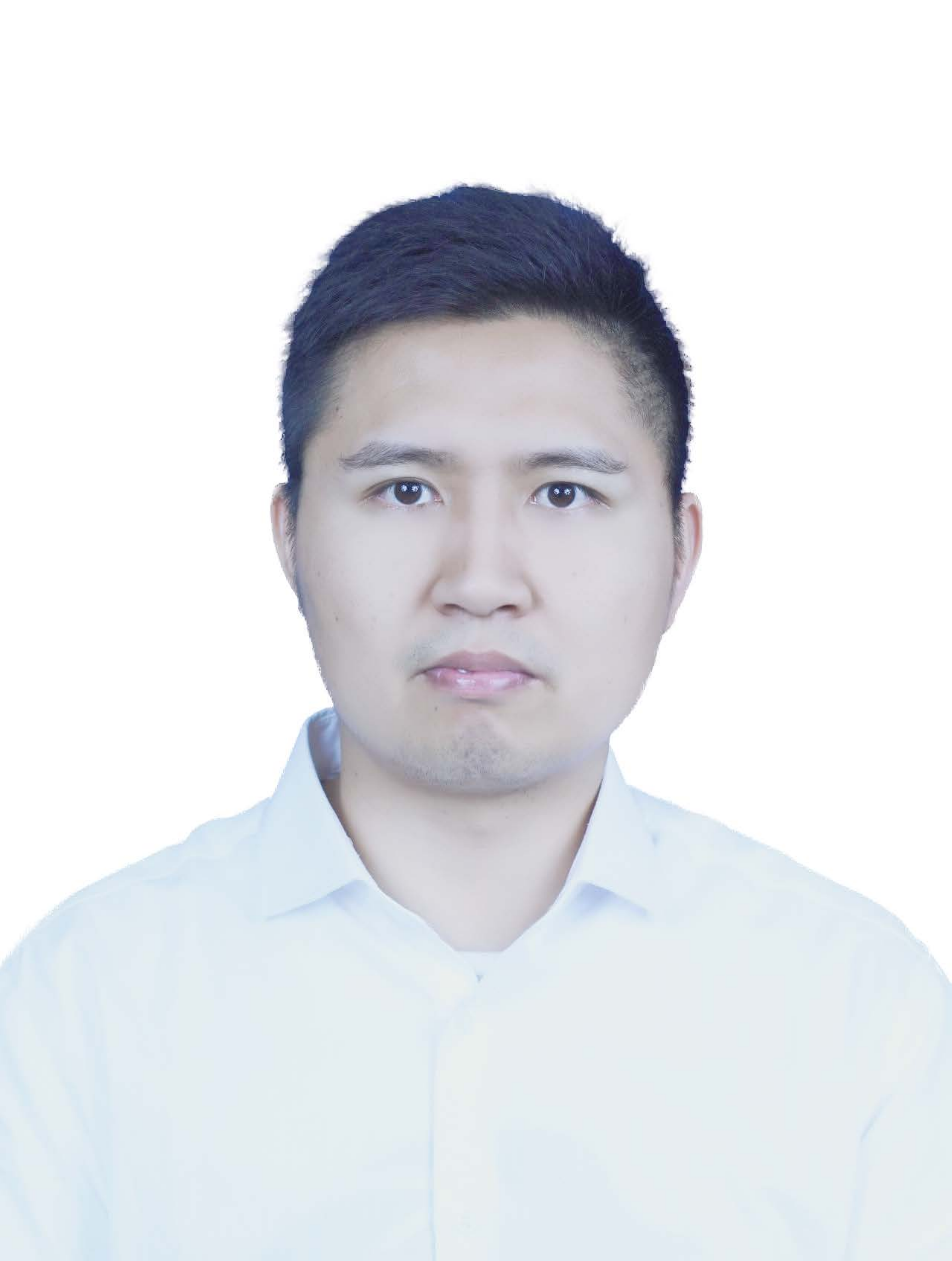}}]{Guowen Xu }
is currently a Research Fellow with Nanyang Technological University, Singapore. He received the Ph.D. degree at 2020 from University of Electronic Science and Technology of China. He has published papers in reputable conferences/journals, including
ACM CCS, NeurIPS, ASIACCS, ACSAC, ESORICS, IEEE TIFS, and IEEE TDSC. His research interests include applied cryptography and  privacy-preserving  Deep Learning.
\end{IEEEbiography}
\vspace{-13 mm}
\begin{IEEEbiography}[{\includegraphics[width=1in,height=1.25in,clip,keepaspectratio]{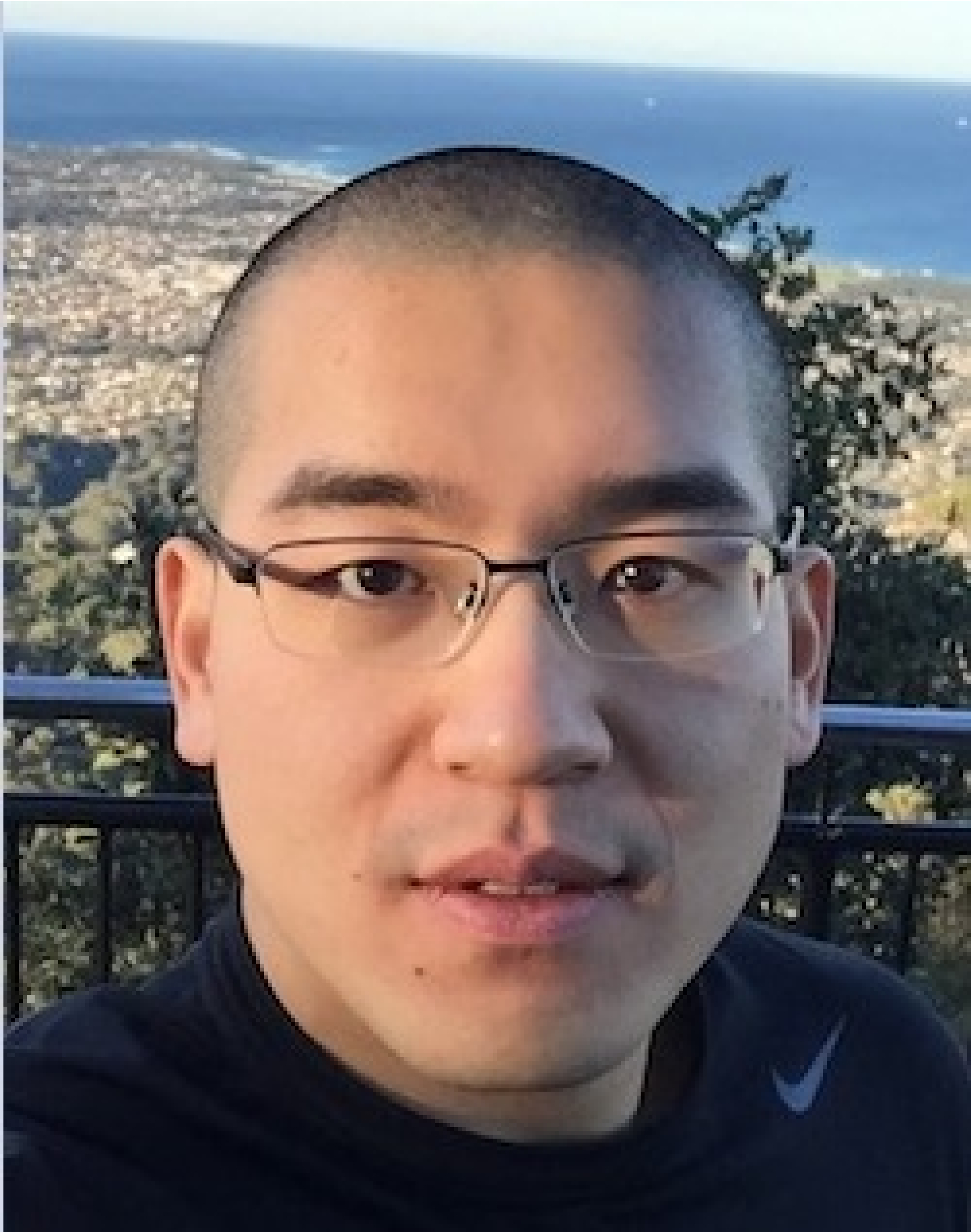}}]{Shengmin Xu }
 is currently an Associate Professor at Fujian Provincial
Key Laboratory of Network Security and Cryptology, College of Computer and Cyber Security, Fujian
Normal University, Fuzhou, China. Previously, he was a Senior Research Engineer with the School of Computing and Information Systems, Singapore
Management University.  His
research interests include cryptography and information security.
\end{IEEEbiography}
\vspace{-13 mm}
\begin{IEEEbiography}[{\includegraphics[width=1in,height=1.25in,clip,keepaspectratio]{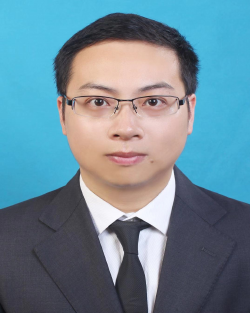}}]{Jianting Ning }
is currently a Professor with the Fujian
Provincial Key Laboratory of Network Security and Cryptology, College of Computer and Cyber Security,
Fujian Normal University, China.  He has published papers in major conferences/journals, such as
ACM CCS, NDSS, ASIACRYPT, ESORICS, ACSAC, IEEE Transactions on
Information Security and Forensics, and IEEE Transactions on Dependable
and Secure Computing. His research interests include applied cryptography
and information security.
\end{IEEEbiography}
\vspace{-13 mm}
\begin{IEEEbiography}[{\includegraphics[width=1in,height=1.25in]{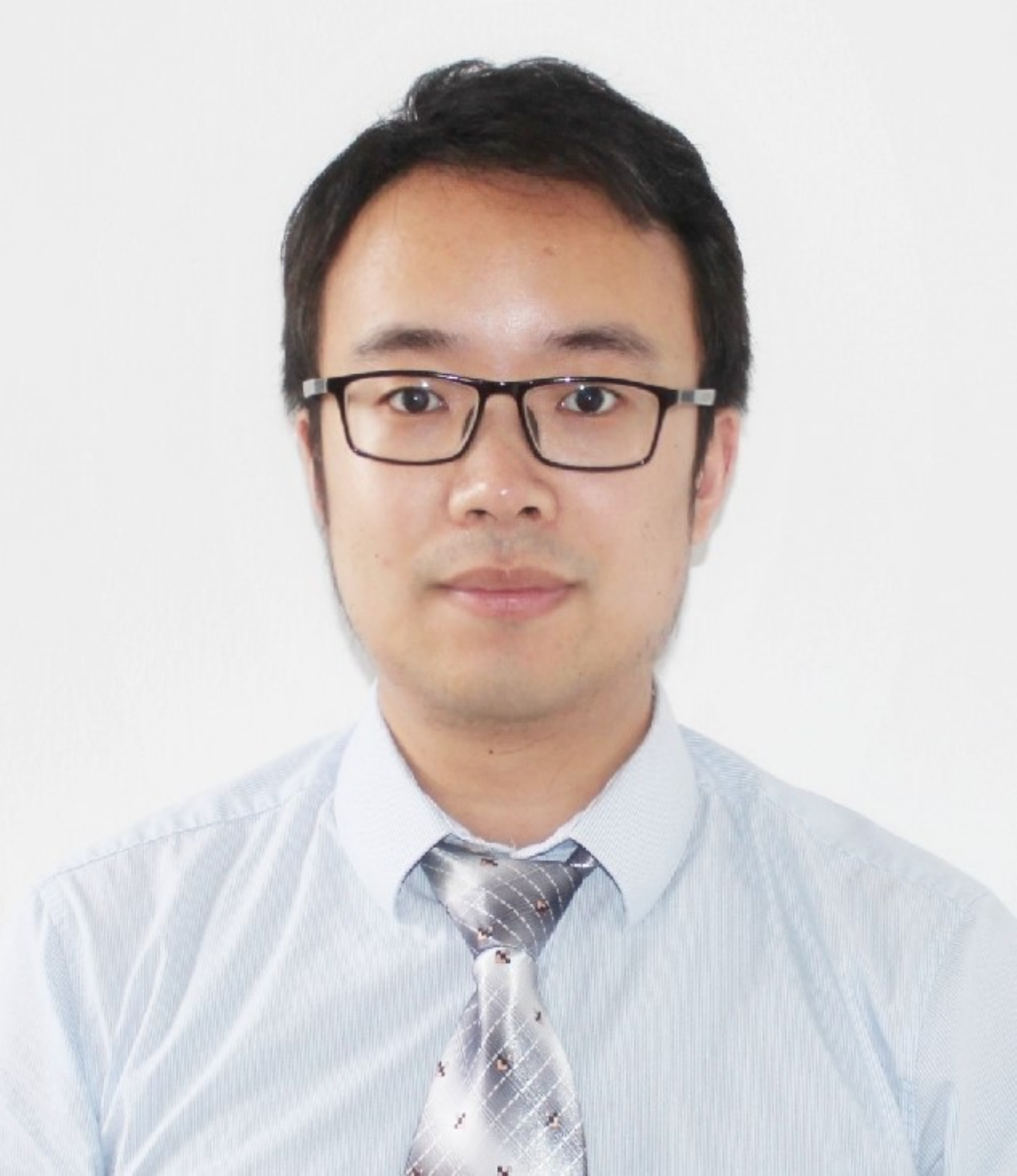}}] {Tianwei Zhang}is an assistant professor in School of Computer Science and Engineering, at Nanyang Technological University. His research focuses on computer system security. He is particularly interested in security threats and defenses in machine learning systems, autonomous systems, computer
architecture and distributed systems. He received his Bachelor's degree at Peking University in 2011, and the Ph.D degree in at Princeton University in 2017.
\end{IEEEbiography}
\vspace{-13 mm}
\begin{IEEEbiography}[{\includegraphics[width=1in,height=1.25in]{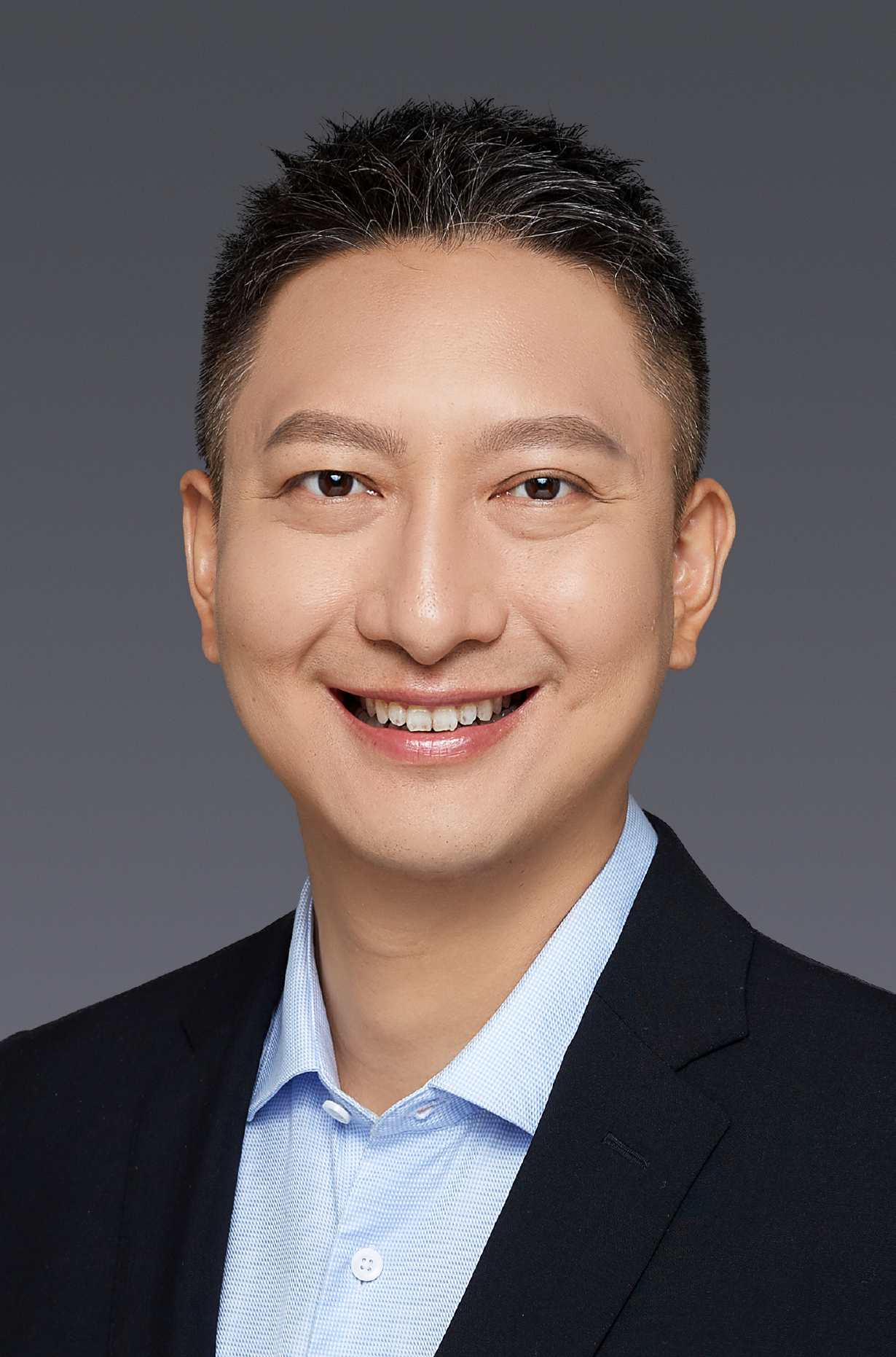}}] {Xinyi Huang}  is currently an Associate Professor at the Thrust of Artificial Intelligence, Information Hub, Hong Kong University of Science and Technology
(Guangzhou), China. His research interests include cryptography and information security.  He is in the Editorial
Board of International Journal of Information Security and SCIENCE CHINA
Information Sciences. He has served as the program/general chair or program
committee member in over 120 international conferences.
\end{IEEEbiography}
\vspace{-13 mm}

\begin{IEEEbiography}[{\includegraphics[width=1in,height=1.9in,clip,keepaspectratio]{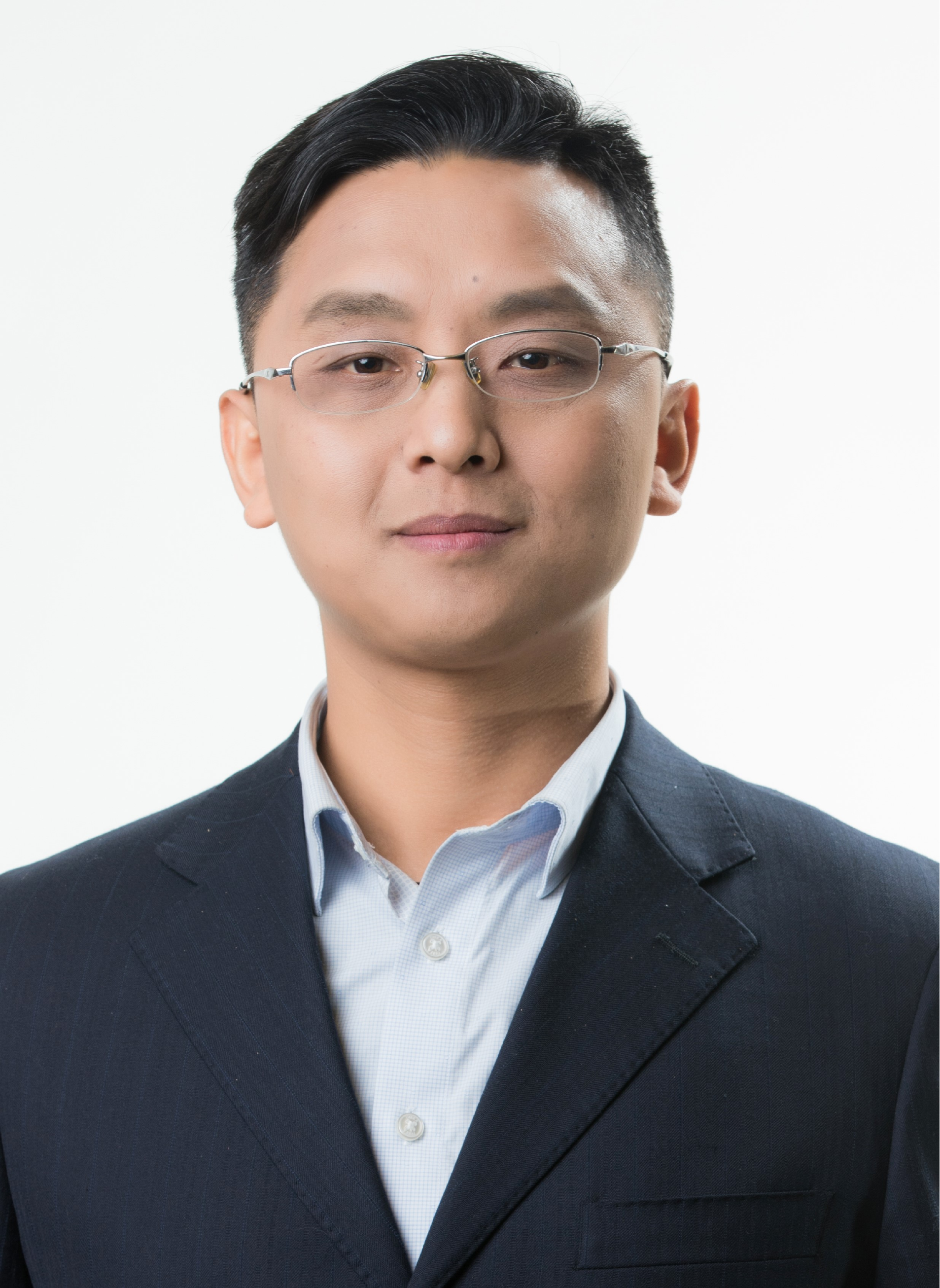}}]{Hongwei Li}
is currently the Head and a Professor at Department of Information Security, School of Computer Science and Engineering, University of Electronic Science and Technology of China.  His research interests include network security and applied cryptography. He is the Senior Member of IEEE, the Distinguished Lecturer of IEEE Vehicular Technology Society.
\end{IEEEbiography}
\vspace{-13 mm}
\begin{IEEEbiography}[{\includegraphics[width=1in,height=1.9in,clip,keepaspectratio]{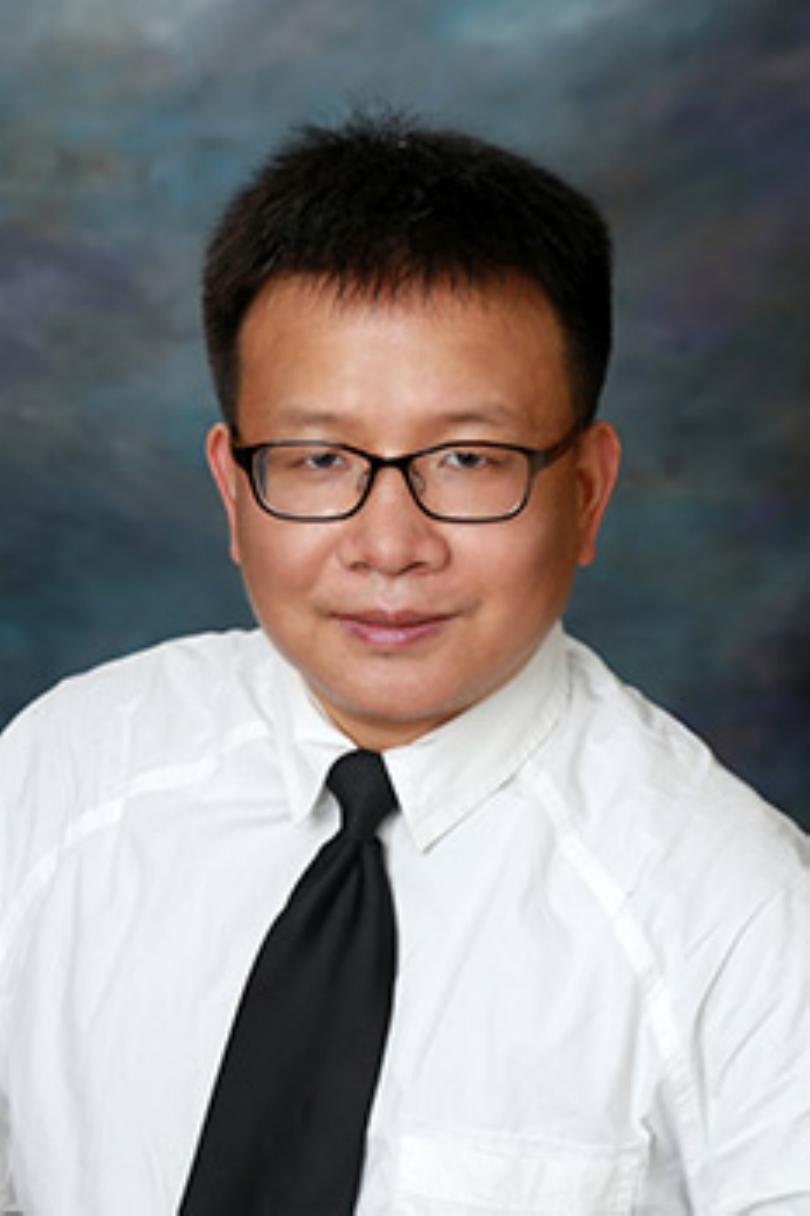}}]{Rongxing Lu}
is currently an associate professor at the Faculty of Computer Science (FCS), University of New
Brunswick (UNB), Canada. He received his PhD degree from
the Department of Electrical \& Computer Engineering, University of Waterloo, Canada, in
2012; and won the 8th IEEE Communications
Society (ComSoc) Asia Pacific (AP) Outstanding Young Researcher Award, in 2013. He is
presently an IEEE Fellow. Dr. Lu currently serves as the Vice-Chair
(Publication) of IEEE ComSoc CIS-TC. Dr. Lu is the Winner of 2016-
17 Excellence in Teaching Award, FCS, UNB.

\end{IEEEbiography}
%
\end{document}